\documentclass[12pt,reqno]{amsart}
\usepackage{amssymb,amsfonts,latexsym,amstext,amsmath,epsfig}
\usepackage[left=3cm,top=3cm,right=3cm,bottom=3cm]{geometry}

\newtheorem{theorem}{Theorem}[section]
\newtheorem{lemma}[theorem]{Lemma}
\numberwithin{equation}{section}

\newcommand{\vol}{{\rm vol}}

\newcommand\eps{\varepsilon}
\newcommand{\E}{\mathbb E}
\newcommand{\Prob}{\mathbb{P}}

\begin{document}
\title{Models of on-line social networks}

\author{Anthony Bonato}
\address{Department of Mathematics\\
Ryerson University\\
Toronto, ON\\
Canada, M5B 2K3} \email{abonato@ryerson.ca}
\author{Noor Hadi}
\address{Department of Mathematics\\
Wilfrid Laurier University\\
Waterloo, NS\\
Canada, N2L 3C5} \email{hadi4130@wlu.ca}
\author{Paul Horn}
\address{Department of Mathematics and Computer Science\\
Emory University\\
Atlanta, GA\\
U.S.A.,  30322} \email{phorn@mathcs.emory.edu}
\author{Pawe\l{} Pra\l{}at}
\address{Department of Mathematics and Statistics\\
Dalhousie University\\
Halifax NS\\
Canada, B3H 3J5} \email{pralat@mathstat.dal.ca}
\author{Changping Wang}
\address{Department of Mathematics\\
Ryerson University\\
Toronto, ON\\
Canada, M5B 2K3}
\email{cpwang@ryerson.ca}

\keywords{complex networks, on-line social networks, transitivity, densification power law, average distance, clustering coefficient, spectral gap, bad expansion, normalized Laplacian}
\thanks{The authors gratefully acknowledge support from NSERC and MITACS}
\subjclass{05C75, 68R10, 91D30}
\maketitle

\begin{abstract}

We present a deterministic model for on-line social networks (OSNs) based
on transitivity and local knowledge in social interactions. In the
Iterated Local Transitivity (ILT) model, at each time-step and for
every existing node $x$, a new node appears which joins to the
closed neighbour set of $x.$ The ILT model provably satisfies a
number of both local and global properties that were observed in
OSNs and other real-world complex networks, such as a
densification power law, decreasing average distance, and higher
clustering than in random graphs with the same average degree.
Experimental studies of social networks demonstrate poor expansion
properties as a consequence of the existence of communities with low
number of inter-community edges. Bounds on the spectral gap for both the
adjacency and normalized Laplacian matrices are proved for graphs
arising from the ILT model, indicating such bad expansion
properties. The cop and domination number are shown to remain the same
as the graph from the initial time-step $G_0$, and the automorphism group
of $G_0$ is a subgroup of the automorphism group of graphs generated at all later time-steps. A randomized version of the ILT model is presented, which
exhibits a tuneable densification power law exponent, and maintains several properties
of the deterministic model.
\end{abstract}

\section{Introduction\label{intro}}

On-line social networks (OSNs) such as Facebook, MySpace, Twitter, and Flickr have
become increasingly popular in recent years. In OSNs, nodes represent
people on-line, and edges correspond to a friendship relation
between them. In these complex real-world networks with sometimes
millions of nodes and edges, new nodes and edges dynamically appear
over time. Parallel with their popularity among the general public
is an increasing interest in the mathematical and general scientific
community on the properties of on-line social networks, in both
gathering data and statistics about the networks, and finding
models simulating their evolution. Data about social interactions in
on-line networks is more readily accessible and measurable than in
off-line social networks, which suggests a need for rigorous models
capturing their evolutionary properties.

The small world property of social networks, introduced by Watts and
Strogatz \cite{sw}, is a central notion in the study of complex
networks, and has roots in the work of Milgram \cite {mil} on short
paths of friends connecting strangers. The small world property
posits low average distance (or diameter) and high clustering, and
has been observed in a wide variety of complex networks.

An increasing number of studies have focused on the small world and
other complex network properties in OSNs. Adamic
et al.\ \cite {ada} provided an early study of an on-line social
network at Stanford University, and found that the network has the
small world property. Correlation between friendship and geographic
location was found by Liben-Nowell et al.\ \cite{ln} using data from
LiveJournal. Kumar et al.\ \cite{kumar2006sae} studied the evolution
of the on-line networks Flickr and Yahoo!360. They found (among other things) that the
average distance between users actually decreases over time, and
that these networks exhibit power-law degree
distributions. Golder et al.\ \cite{golderrsi} analyzed the Facebook
network by studying the messaging pattern between friends with a
sample of 4.2 million users. They also found a power law degree
distribution and the small world property. Similar results were
found in \cite{ahn} which studied Cyworld, MySpace, and Orkut, and
in \cite{mislove2007maa} which examined data collected from four
on-line social networks: Flickr, YouTube, LiveJournal, and Orkut. Power laws for both the in- and out-degree distributions,
low diameter, and high clustering coefficient were reported in the Twitter friendship graph by Java et al.\ \cite{twit0}.
In \cite{twit}, geographic growth patterns and
distinct classes of users were investigated in Twitter.
For further background on complex networks and their models, see the
books \cite{bonato,cald,clbook,dur}.

Recent work by Leskovec et al.\ \cite{les1} underscores the
importance of two additional properties of complex networks above
and beyond more traditionally studied phenomena such as the small
world property. A graph $G\ $with $e_{t}$ edges and $n_{t}$ nodes
satisfies a \emph{densification power law} if there is a constant
$a\in (1,2)$ such that $e_{t}$ is proportional to $n_{t}^{a}.$ In
particular, the average degree grows to infinity with the order of
the network (in contrast to say the preferential attachment model,
which generates graphs with constant average degree). In
\cite{les1}, densification power laws were reported in several
real-world networks such as a physics citation graph and the
internet graph at the level of autonomous systems. Another striking
property found in such networks (and also in on-line social
networks; see \cite{kumar2006sae}) is that distances in the networks
(measured by either diameter or average distance) decreases with
time. The usual models such as preferential attachment or copying
models have logarithmically or sublogarithmically growing diameters
and average distances with time. Various models (such as the Forest
Fire \cite{les1} and Kronecker multiplication \cite{les2} models)
were proposed simulating power law degree distribution,
densification power laws, and decreasing distances.

We present a new model, called \emph{Iterated Local Transitivity}
(ILT), for OSNs and other complex networks which
dynamically simulates many of their properties. The present article is the full
version of the proceedings paper \cite{anppc}. Although modelling
has been done extensively for other complex networks such as the
web graph (see \cite{bonato}), models of OSNs
have only recently been introduced (such as those in
\cite{crandal,kumar2006sae,ln}). The central idea behind the ILT
model is what sociologists call \emph{transitivity}: if $u$ is a
friend of $v$, and $v $ is a friend of $w,$ then $u$ is a friend of
$w$ (see, for example, \cite{ove,scott,white}). In its simplest
form, transitivity gives rise to the notion of \emph{cloning}, where
$u$ is joined to all of the neighbours of $v$. In the ILT model,
given some initial graph as a starting point, nodes are repeatedly
added over time which clone \emph{each} node, so that the new nodes form an
independent set. The ILT model not only incorporates transitivity,
but uses only local knowledge in its evolution, in that a new node
only joins to neighbours of an existing node. Local knowledge is an
important feature of social and complex networks, where nodes have
only limited influence on the network topology. We stress that our
approach is mathematical rather than empirical; indeed, the ILT
model (apart from its potential use by computer and social scientists as a simplified
model for OSNs) should be of theoretical interest in its own right.

Variants of cloning were considered earlier in duplication models
for protein-protein interactions \cite{beb,bhan,chung,pastor}, and in
copying models for the web graph \cite{bj,krrstu}. There are several
differences between the duplication and copying models and the ILT
model. For one, duplication models are difficult to analyze due to
their rich dependence structure. While the ILT model displays a
dependency structure, determinism makes it more amenable to
analysis. The ILT model may be viewed as simplified snapshot of the
duplication model, where \emph{all} nodes are cloned in a given
time-step, rather than duplicating nodes one-by-one over time.
Cloning all nodes at each time-step as in the ILT model leads to
densification and high clustering, along with bad expansion
properties (as we describe in Subsection~\ref{results}).

We finish the introduction with some asymptotic notation. Let $f$ and $g$ be functions whose domain is some
fixed subset of $\mathbb{R}$. We write $f\in O(g)$ if
\begin{equation*}
\lim \sup_{t\rightarrow \infty }\frac{f(t)}{g(t)}
\end{equation*}%
exists and is finite. We will abuse notation and write $f=O(g)$.
We write $f=\Omega (g)$ if $g=O(f)$, and $f=\Theta (g)$ if $f=O(g)$ and $%
f=\Omega (g).$ If $\lim_{t\rightarrow \infty }\left\vert \frac{f(t)}{g(t)}%
\right\vert =0,$ then $f=o(g)$ (or $g=\omega (f)$). So if $f=o(1),$ then $f$
tends to $0.$

\subsection{The ILT Model}

We now give a precise formulation of the model. The ILT model
generates finite, simple, undirected graphs $(G_{t}:t\geq 0).$ \emph{Time-step} $t$, for $t\ge 1$, is defined to be the transition between $G_{t-1}$ and
$G_{t}.$ (Note that a directed graph model
will be considered in the sequel. See also Section~3.)
The only
parameter of the model is the initial graph $G_{0},$ which is any
fixed finite \emph{connected} graph. Assume that for a fixed $t\geq
0,$ the graph $G_{t}$ has been constructed. To form $G_{t+1},$ for
each node $x\in V(G_{t}),$ add its \emph{clone} $x^{\prime },$ such
that $x^{\prime }$ is joined to $x$ and all of its neighbours at
time $t.$ Note that the set of new nodes at time $t+1$ form an
independent set of cardinality $|V(G_t)|.$ See Figure~1 for the graphs
generated from the $4$-cycle over the time-steps $t=1,$ $2,$ $3,$ and $4.$

\begin{figure}[h]
\centering
\begin{tabular}{cc}
\epsfig{file=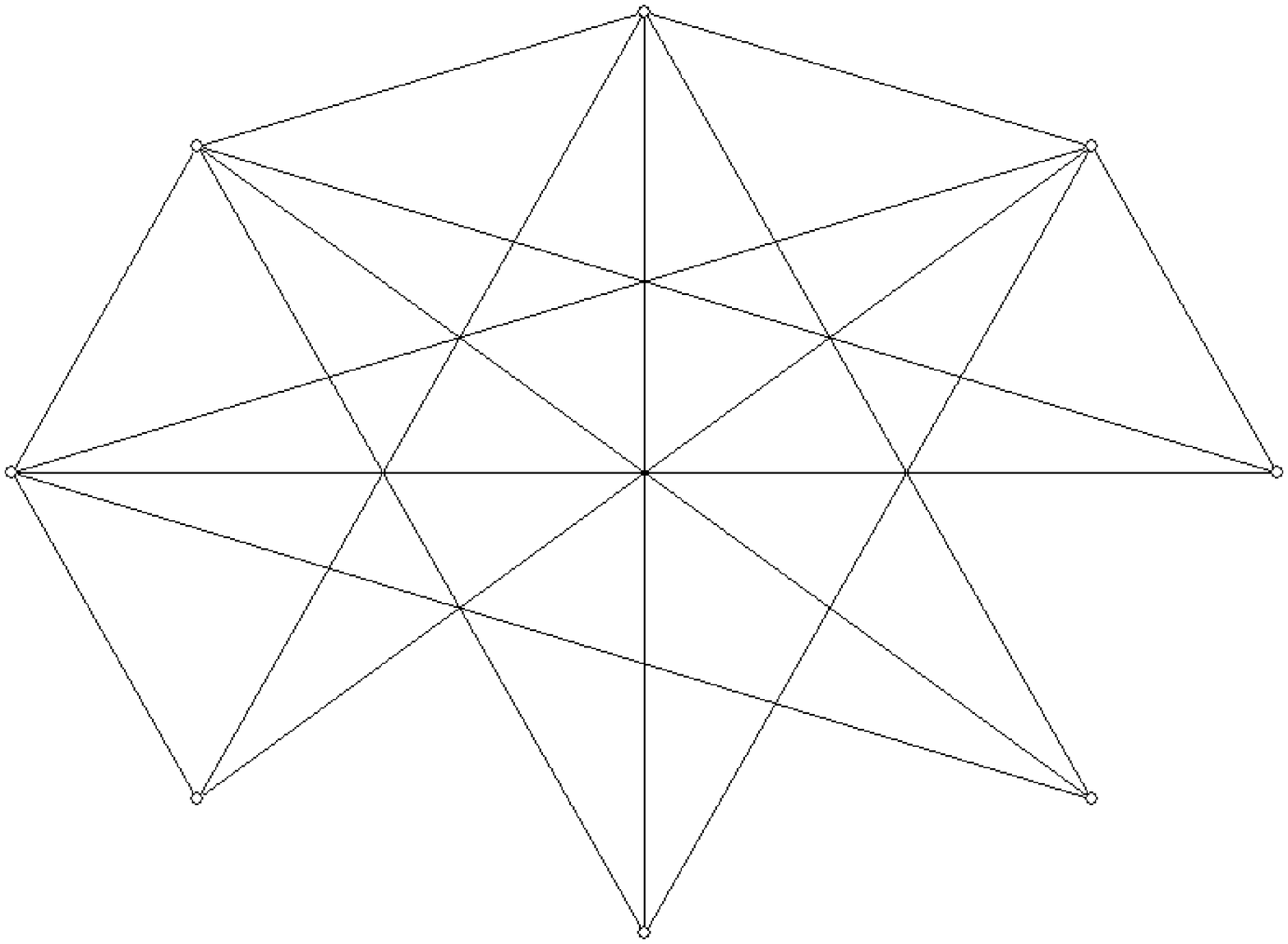,width=2.5in, height=2.5in} &
\epsfig{file=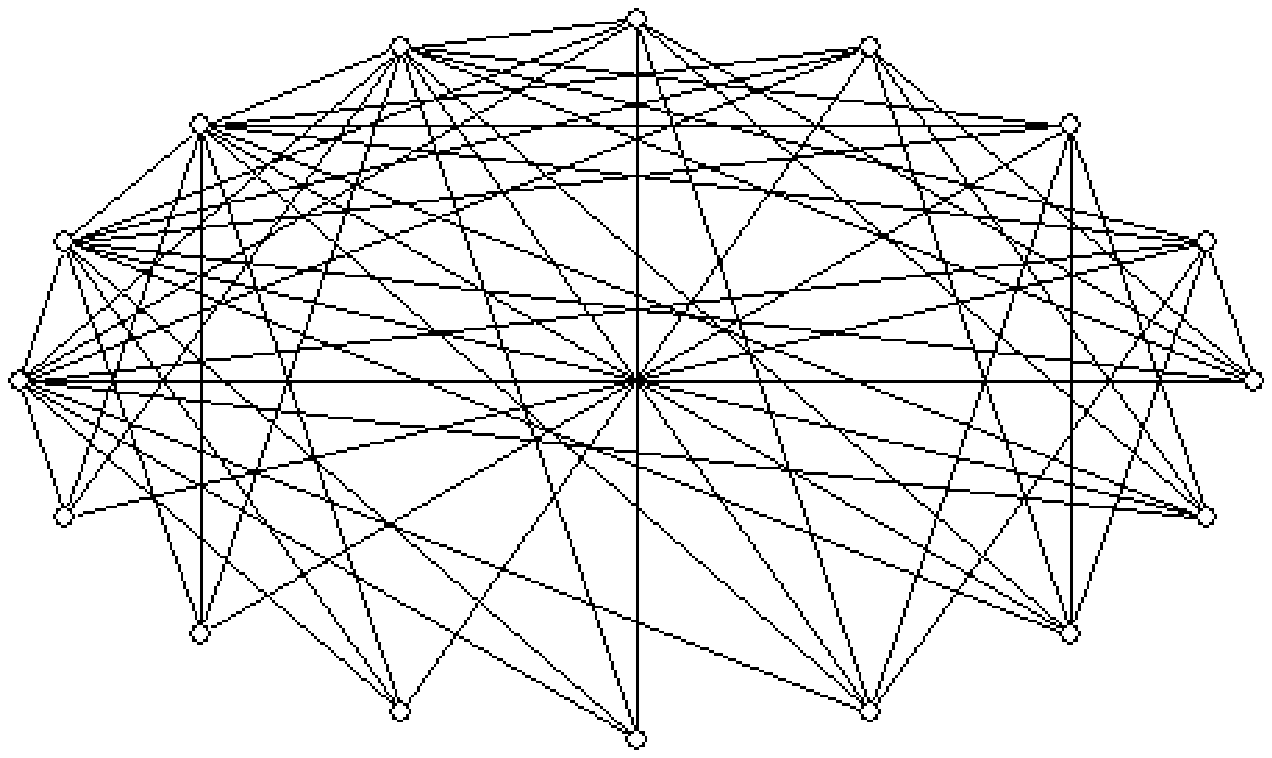,width=2.5in, height=2.5in} \\
\epsfig{file=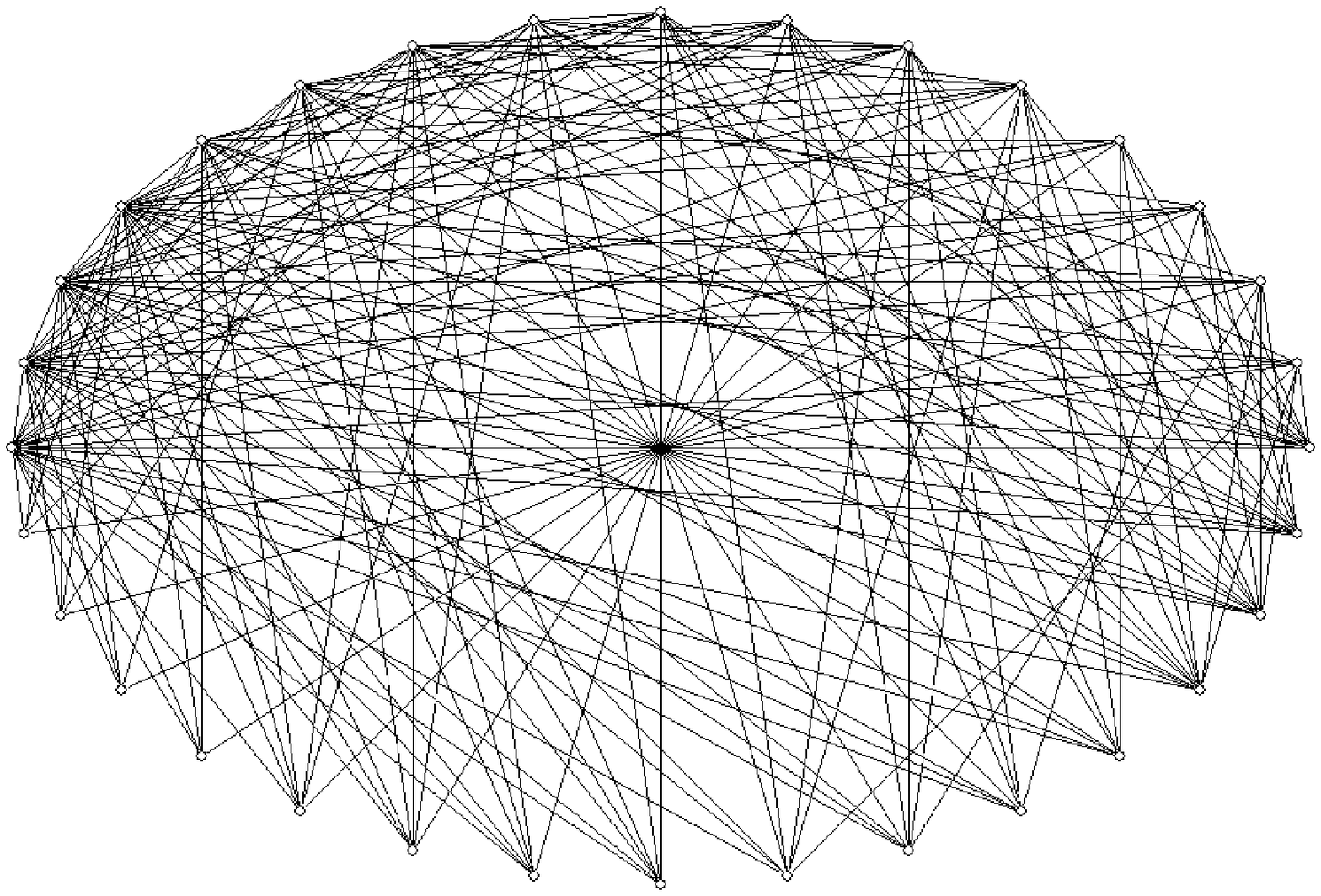,width=2.5in, height=2.5in} &
\epsfig{file=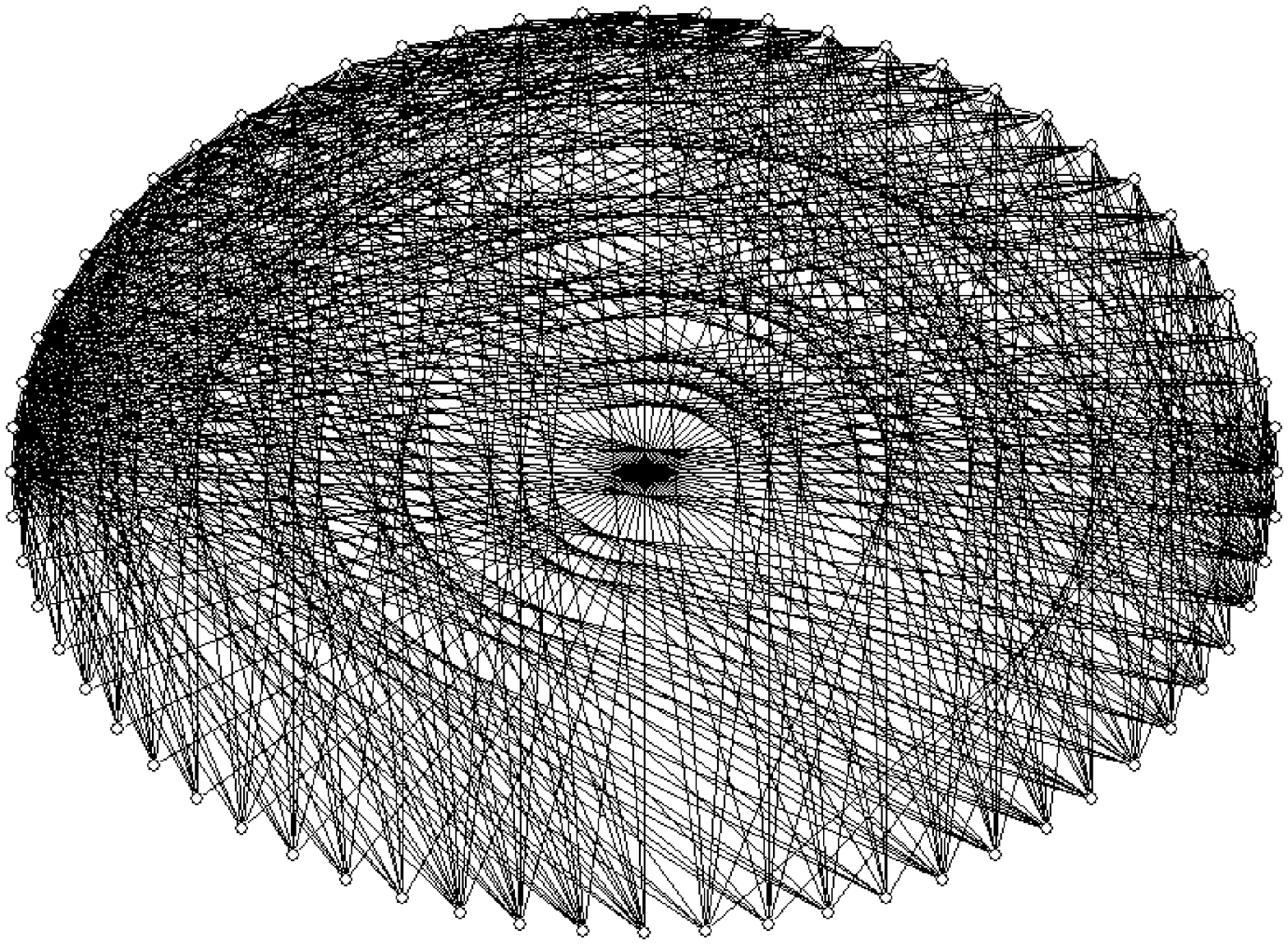,width=2.5in, height=2.5in} \\
\end{tabular}
\caption{The evolution of the ILT model with $G_0=C_4,$ for $t=1,2,3,4.$}
\end{figure}

We write $\deg _{t}(x)$ for the degree of a node at time $t,$
$n_{t}$ for the order of $G_{t},$ and $e_{t}$ for its number of
edges. It is straightforward to see that $n_t=2^tn_0$. Given a node
$x$ at time $t,$ let $x^{\prime }$ be its clone. The elementary but
important recurrences governing the degrees of nodes are given as

\begin{eqnarray}
\deg _{t+1}(x) &=&2\deg _{t}(x)+1,  \label{one} \\
\deg _{t+1}(x^{\prime }) &=&\deg _{t}(x)+1. \label{two}
\end{eqnarray}

\subsection{Main Results}\label{results}

We state our main results on the ILT model,
with proofs deferred to the next section. We give rigorous proofs that the ILT model generates graphs satisfying a densification power law and in many cases decreasing average distance (properties shared by the Forest Fire \cite{les1} and Kronecker multiplication \cite{les2} models). A randomized version of the ILT model is introduced with tuneable densification power law exponent. Properties of the ILT model not shown in the models of \cite{les1,les2} are higher clustering than in random graphs with the same average degree,
and smaller spectral gaps for both their normalized Laplacian
and adjacency matrices than in random graphs. Further, the cop and domination numbers are shown to remain the same
as the graph from the initial graph $G_0$, and the automorphism group
of $G_0$ is a subgroup of the automorphism group of graphs generated at all later times. The ILT model does not, however (unlike the models of \cite{les1,les2}) generate graphs with a power law degree distribution. The number of nodes in the ILT model grows exponentially with time (as in the Kronecker multiplication model, but unlike in the Forest Fire model).

We first demonstrate that the model exhibits a densification power law.  Define the \emph{volume} of $G_{t}$ by
$$
\mathrm{vol}(G_{t})=\sum_{x\in V(G_{t})}\deg _{t}(x) =2e_t.
$$
\begin{theorem}
\label{adeg}For $t>0,$ the average degree of $G_t$ equals
$$
\left( \frac{3}{2}\right) ^{t}\left(
\frac{\mathrm{vol}(G_{0})}{n_{0}} +2\right) -2.
$$
\end{theorem}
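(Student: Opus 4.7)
The plan is to derive a linear recurrence for $\vol(G_t)$ from the node-degree recurrences (1.1) and (1.2), convert this into a first-order linear recurrence for the average degree, and solve it in closed form.

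First I would sum the two recurrences over all old nodes $x \in V(G_t)$. Every node $x \in V(G_t)$ contributes $\deg_{t+1}(x) = 2\deg_t(x) + 1$ and its clone $x'$ contributes $\deg_{t+1}(x') = \deg_t(x) + 1$, so
\begin{equation*}
\vol(G_{t+1}) = \sum_{x \in V(G_t)} \bigl( 2\deg_t(x)+1 \bigr) + \sum_{x \in V(G_t)} \bigl( \deg_t(x)+1 \bigr) = 3\,\vol(G_t) + 2 n_t.
\end{equation*}
Using $n_t = 2^t n_0$ and dividing the previous identity by $n_{t+1} = 2 n_t$, the average degree $a_t := \vol(G_t)/n_t$ satisfies the scalar recurrence
\begin{equation*}
a_{t+1} = \tfrac{3}{2}\, a_t + 1.
\end{equation*}

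Next I would solve this recurrence. The unique fixed point is $a^{*} = -2$, so setting $b_t := a_t + 2$ yields the geometric recurrence $b_{t+1} = \tfrac{3}{2} b_t$. Hence $b_t = (3/2)^t b_0$, which after substituting back $a_0 = \vol(G_0)/n_0$ gives
\begin{equation*}
a_t = \left(\tfrac{3}{2}\right)^t \!\left( \frac{\vol(G_0)}{n_0} + 2 \right) - 2,
\end{equation*}
as required.

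There is no real obstacle here; the only point requiring a moment of care is bookkeeping the two contributions (old node and clone) to $\vol(G_{t+1})$ correctly, and recognising that the resulting linear recurrence is solved most cleanly by shifting to the fixed point rather than by direct unrolling.
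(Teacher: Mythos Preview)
Your proof is correct and follows essentially the same approach as the paper: both derive the recurrence $\vol(G_{t+1}) = 3\vol(G_t) + 2n_t$ (equivalently, $3\vol(G_t) + n_{t+1}$) by summing the degree recurrences (1.1) and (1.2) over all nodes. The only difference is cosmetic---the paper unrolls the volume recurrence directly and sums the resulting geometric series (Lemma~\ref{lemm}), whereas you first pass to the average-degree recurrence $a_{t+1} = \tfrac{3}{2}a_t + 1$ and solve it by shifting to the fixed point.
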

Note that Theorem~\ref{adeg} supplies a densification power law
with exponent $a=\frac{\log{3}}{\log{2}} \approx 1.58.$ We think
that the densification power law makes the ILT model realistic,
especially in light of real-world data mined from complex networks
(see \cite{les1}).

We study the average distances and clustering coefficient of the model
as time tends to infinity. Define the \emph{Wiener index} of $G_t$ as
$$W(G_t)=\frac{1}{2}\sum\limits_{x,y\in V(G_t)}d(x,y).$$ The Wiener index may be used
to define the \emph{average distance} of $G_t$ as
$$L(G_t) = \frac{W(G_t)}{ \binom{n_t}{2} }. $$ We will compute the average
distance by deriving first the Wiener index. Define the
\emph{ultimate average distance of} $G_{0}$, as
$$
UL(G_{0})=\lim_{t\rightarrow \infty }L(G_{t})
$$
assuming the limit exists. Note that the ultimate average distance is a new graph
parameter. We provide an exact value for $L(G_{t})$
and compute the ultimate average distance for any initial graph
$G_0.$

\begin{theorem}\label{adist}
$ $\\
\begin{enumerate}
\item For $t>0,$
\begin{eqnarray*}
W(G_{t}) &=&4^{t}\left( W(G_{0})+(e_{0}+n_{0})\left( 1-\left(
\frac{3}{4} \right) ^{t}\right) \right) .
\end{eqnarray*}

\item For $t>0,$
$$
L(G_{t})= \frac{4^{t}\left( W(G_{0})+(e_{0}+n_{0})\left(
1-\left( \frac{3}{4}\right) ^{t}\right) \right)
}{4^{t}n_{0}^{2}-2^{t}n_{0}}.
$$

\item For all graphs $G_{0},$
$$
UL(G_{0})=\frac{W(G_{0})+e_{0}+n_{0}}{n_{0}^{2}}.
$$
Further, $UL(G_{0})\leq L(G_{0})$ if and only if $W(G_{0}) \ge
(n_0-1)(e_0+n_0).$
\end{enumerate}
\end{theorem}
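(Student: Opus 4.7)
\emph{Proof plan.} The backbone of the argument is a careful analysis of how distances in $G_{t+1}$ relate to those in $G_t$, which will reduce the Wiener index to a first-order linear recurrence. Writing $x'$ for the clone of a node $x\in V(G_t)$, I would first establish three distance identities. For two original nodes $u,v\in V(G_t)$, the claim is that $d_{t+1}(u,v)=d_t(u,v)$, because any path in $G_{t+1}$ that uses a clone $x'$ visits $x'$ between two nodes of $N_t[x]$, and that two-step detour can be replaced in $G_t$ by a path of length at most $2$; hence clones create no shortcut among originals. For a clone $x'$ and an original $y\ne x$, the identity $d_{t+1}(x',y)=d_t(x,y)$ drops out of $d_{t+1}(x',y)=1+\min_{z\in N_t[x]}d_t(z,y)$ together with the triangle inequality (and $d_{t+1}(x',x)=1$ is immediate). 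For two distinct clones $x',y'$ with $x\ne y$, the same sort of minimization, this time over $N_t[x]\times N_t[y]$, gives $d_{t+1}(x',y')=\max(2,d_t(x,y))$, the ``$+2$'' overhead being decisive precisely when $x$ and $y$ are close in $G_t$.

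Given these identities, I would split the unordered pairs of $V(G_{t+1})$ into three types and compute each contribution to $W(G_{t+1})$. Original-original pairs contribute $W(G_t)$; original-clone pairs contribute $n_t+2W(G_t)$ (a $+1$ for each of the $n_t$ pairs $\{x,x'\}$, plus $2W(G_t)$ from the remaining pairs since the ordered sum $\sum_{x\ne y}d_t(x,y)$ double-counts unordered pairs); and clone-clone pairs contribute $W(G_t)+e_t$, the extra $e_t$ accounting for each edge of $G_t$ being lifted from distance $1$ to distance $2$ in $G_{t+1}$. Summing yields
\begin{equation*}
W(G_{t+1}) \;=\; 4W(G_t) + n_t + e_t.
\end{equation*}
From Theorem~\ref{adeg} we get $e_t=3^t(e_0+n_0)-2^tn_0$, whence $n_t+e_t=3^t(e_0+n_0)$, and the recurrence becomes $W(G_{t+1})=4W(G_t)+3^t(e_0+n_0)$. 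Dividing through by $4^{t+1}$ telescopes, and the resulting geometric sum $\sum_{s=0}^{t-1}(3/4)^s$ produces the closed form claimed in part~(1).

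Part~(2) then follows at once by dividing $W(G_t)$ by $\binom{n_t}{2}$ and using $n_t=2^tn_0$. For part~(3), the factors $(3/4)^t$ and $2^t/4^t$ both vanish as $t\to\infty$, so the formula collapses to $UL(G_0)=(W(G_0)+e_0+n_0)/n_0^2$, and the equivalence $UL(G_0)\le L(G_0)\Longleftrightarrow W(G_0)\ge(n_0-1)(e_0+n_0)$ is a one-line cross-multiplication. The one nontrivial step in the whole argument is the clone-clone distance identity: besides invoking the triangle inequality twice for the generic case $d_t(x,y)\ge 3$, one must handle the regime $d_t(x,y)\in\{1,2\}$, where the minimum of $d_t(z,w)$ over $z\in N_t[x]$ and $w\in N_t[y]$ drops to $0$ rather than $d_t(x,y)-2$, so that the $\max(2,\,\cdot)$ shape is genuinely forced and the extra $+e_t$ term in the clone-clone sum appears. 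Once this case analysis is in place, everything downstream is mechanical.
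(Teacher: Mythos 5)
Your proposal is correct and follows essentially the same route as the paper: you prove distance-preservation identities relating $G_{t+1}$ to $G_t$ (the paper's Lemma~\ref{k}), decompose the Wiener sum by pair type to obtain $W(G_{t+1})=4W(G_t)+n_t+e_t$, substitute $e_t+n_t=3^t(e_0+n_0)$ from Lemma~\ref{lemm}, and telescope. The only cosmetic differences are that you phrase the clone-clone identity in the unified form $d_{t+1}(x',y')=\max(2,d_t(x,y))$ rather than the paper's two-case statement (the two are identical, and both feed the same $+e_t$ correction), and you organize the bookkeeping by three unordered-pair types rather than the paper's five ordered-pair cases; these are equivalent accounting schemes giving the same recurrence.
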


Note that the average distance of $G_{t}$ is bounded above by
$\mathrm{diam} (G_{0})+1$ (in fact, by $\mathrm{diam}(G_{0})$ in all
cases except cliques). Further, the condition in (3) for $UL(G_{0})<
L(G_{0})$ holds for large cycles and paths. Hence, for many initial
graphs $G_{0},$ the average distance decreases, a property observed
in OSNs and other complex networks (see \cite{kumar2006sae,les1}).

Let $N_t(x)$ be the neighbour set of $x$ at
time $t$, let $G_{t}\upharpoonright N_t(x)$ be the subgraph induced by
$N_t(x)$ in $G_t,$ and let $e(x,t)$ be the number of edges in
$G_{t}\upharpoonright N_t(x).$ For a node $x\in V(G_{t})$ with degree
at least $2$ define
$$
c_t(x)=\frac{e(x,t)}{{{\deg _{t}(x)}\choose{2}} }.
$$
By convention $c_t(x)=0$ if the degree of $x$ is at most $1.$ The
\emph{clustering coefficient} of $G_t$ is $$C(G_t) = \frac{ \sum_{x
\in V(G_t)}c_t(x)}{n_t}.$$
The clustering coefficient of the graph at time $t$ generated by the ILT model is estimated and
shown to tend to $0$ slower than a $G(n,p)$ random graph with the
same average degree.

\begin{theorem}\label{cluster}
$$
\Omega \left( \left( \frac{7}{8}\right) ^{t}t^{-2}\right)
=C(G_{t})=O\left( \left( \frac{7}{8}\right) ^{t}t^{2}\right) .
$$
\end{theorem}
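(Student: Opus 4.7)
The plan is to derive a one-step recurrence for $c_{t+1}(x)$ and $c_{t+1}(x')$, sum it over $V(G_t)$, and telescope. I would classify the edges of $G_{t+1}\upharpoonright N_{t+1}(x)$: old--old edges ($e(x,t)$ of them); edges from $x'$ to each $y\in N_t(x)$ ($\deg_t(x)$); edges $yy'$ for $y\in N_t(x)$ ($\deg_t(x)$); and edges $yz'$, $zy'$ from each $yz\in E(G_t\upharpoonright N_t(x))$ ($2e(x,t)$). This gives $e(x,t+1)=3e(x,t)+2\deg_t(x)$ and analogously $e(x',t+1)=e(x,t)+\deg_t(x)$. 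Combining with \eqref{one}--\eqref{two} and summing the resulting formulas, with $k:=\deg_t(x)$,
\[
c_{t+1}(x)+c_{t+1}(x') \;=\; \tfrac{7}{4}\,c_t(x) + E_t(x),\qquad E_t(x):=\frac{8-9c_t(x)}{4(2k+1)}+\frac{2(1-c_t(x))}{k+1},
\]
so that $-1/(8k)\le E_t(x)\le 4/k$.

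Setting $S_t:=\sum_{x\in V(G_t)}c_t(x)$ and summing gives $S_{t+1}=\tfrac{7}{4}S_t+\mathcal{E}_t$ with $|\mathcal{E}_t|\le 4\sum_x 1/\deg_t(x)$. A short induction shows the minimum degree grows by exactly one per step (a clone of a minimum-degree vertex realises the new minimum), so $\deg_t(x)\ge D_0+t$ for every $x$. A finer lineage argument---each vertex in $V(G_t)$ is indexed by a binary string in $\{0,1\}^t$ recording, at each step, whether its ancestor was cloned, and $\deg_t(x_b)\ge 2^{|b|_0}$ where $|b|_0$ is the number of ``not cloned'' steps---via $\sum_{k=0}^t\binom{t}{k}2^{-k}=(3/2)^t$ yields $\sum_x 1/\deg_t(x)=O(n_0(3/2)^t)$ and $|\mathcal{E}_t|=O(n_0(3/2)^t)$.

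For the upper bound, dividing $S_{t+1}\le\tfrac{7}{4}S_t+O(n_0(3/2)^t)$ by $(7/4)^{t+1}$ and telescoping gives $S_t/(7/4)^t\le S_0+O(n_0)\sum_{s\ge 0}(6/7)^s=O(n_0+S_0)$, so $C(G_t)=S_t/n_t=O((7/8)^t)$. The same telescoping applied to $S_{t+1}\ge\tfrac{7}{4}S_t-O(n_0(3/2)^t)$, initiated at $t_0=1$ where $S_1>0$ (every time-$1$ clone contains the edge from its parent to any common neighbour inside its closed neighbourhood), yields $C(G_t)=\Omega((7/8)^t)$. These are even somewhat stronger than the stated $O((7/8)^tt^2)$ and $\Omega((7/8)^tt^{-2})$ bounds.

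The main technical obstacle is the error bound on $\mathcal{E}_t$: the naive estimate $\deg_t\ge t$ gives only $\sum 1/\deg_t=O(n_t/t)$, which after telescoping yields the weaker $C(G_t)=O(1/t)$ and cannot reach the $(7/8)^t$ decay. The lineage-based improvement $\sum 1/\deg_t=O(n_0(3/2)^t)$ is what turns the telescoped error into a convergent geometric series in $(6/7)^s$, producing the exponential $(7/8)^t$ decay; the $t^{\pm 2}$ slack in the statement is a safe buffer absorbing rougher constants.
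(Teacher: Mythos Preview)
Your approach is genuinely different from the paper's and, for the upper bound, cleaner. The paper argues vertex by vertex: for a node whose binary lineage string has $k$ zeros it proves $e(x,t)=\Omega(3^k)$ and $e(x,t)=O(3^k t^2)$ (Lemma~\ref{uu}), combines this with the degree bounds $\deg_t(x)\in[\,\Omega(2^k),\,O(2^k t)\,]$ (Lemma~\ref{degree}) to obtain $c_t(x)\in[\,\Omega((3/4)^k t^{-2}),\,O((3/4)^k t^2)\,]$, and then sums over $k$ via $\sum_k\binom{t}{k}(3/4)^k=(7/4)^t$. Your aggregate recurrence $S_{t+1}=\tfrac74 S_t+\mathcal E_t$ with $|\mathcal E_t|=O(n_0(3/2)^t)$ bypasses the per-vertex estimates on $e(x,t)$ and delivers $C(G_t)=O((7/8)^t)$ directly, without the extraneous $t^{2}$ factor. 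Both routes do still rely on the same lineage bound $\deg_t(x)\ge 2^{|b|_0}$ to control the inverse-degree sum; what you avoid is the two-sided sandwich on $e(x,t)$, which is where the paper's $t^{\pm2}$ slack originates.

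There is, however, a real gap in your lower bound. Telescoping $S_{t+1}\ge\tfrac74 S_t-c_1 n_0(3/2)^t$ from some $t_0$ yields
\[
\frac{S_t}{(7/4)^t}\;\ge\;\frac{S_{t_0}}{(7/4)^{t_0}}-4c_1 n_0\,(6/7)^{t_0},
\]
and to conclude positivity you need $S_{t_0}>4c_1 n_0(3/2)^{t_0}$, not merely $S_{t_0}>0$. Your sentence ``initiated at $t_0=1$ where $S_1>0$'' does not establish this quantitative inequality; for instance when $G_0=C_4$ one has $S_0=0$, and whether $S_1$ exceeds the required threshold depends on the constant $c_1$, which itself depends on $G_0$. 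Equivalently, you have shown that $L:=\lim_{t\to\infty}S_t/(7/4)^t$ exists (since $\sum\mathcal E_s/(7/4)^{s+1}$ converges absolutely), but you have not ruled out $L=0$. One way to close the gap is to first establish a cruder lower bound such as $S_t=\Omega((7/4)^t/t^2)$---which is precisely what the paper's per-vertex estimate supplies---and use it to locate a suitable $t_0$; another is to argue that the positive part of $\mathcal E_t$ (from the many vertices with small $c_t(x)$) eventually dominates its negative part. Either fix requires additional work beyond what you have written.
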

\noindent Observe that $C(G_t)$ tends to $0$ as $t\rightarrow \infty .$ If we
let $ n_{t}=n$ (so $t\sim \log_2 n ),$ then this gives that
\begin{equation*}
C(G_{t})= n^{\log _{2}(7/8) + o(1)}.
\end{equation*}
In contrast, for a random graph $G(n,p)$ with comparable average
degree $$pn=\Theta ((3/2)^{\log _{2}n})=\Theta (n^{\log _{2}(3/2)})$$
as $G_{t}$, the clustering coefficient is $p=\Theta (n^{\log
_{2}(3/4)})$ which tends to zero much faster than $C(G_{t}).$ (For a discussion of the
clustering coefficient of $G(n,p)$, see Chapter~2 of \cite{bonato}.)

Social networks often organize into separate clusters in which the
intra-cluster links are significantly higher than the number of
inter-cluster links. In particular, social networks contain
communities (characteristic of social organization), where tightly
knit groups correspond to the clusters \cite{gn}. As a result,
social networks possess bad expansion properties realized by small
gaps between their first and second eigenvalues \cite{estrada}. We find that the ILT model has bad expansion properties as indicated by the spectral gap of both its normalized Laplacian and adjacency matrices.

For regular graphs, the eigenvalues of the adjacency matrix are
related to several important graph properties, such as in the
expander mixing lemma. The normalized Laplacian of a graph,
introduced by Chung~\cite{sgt}, relates to important graph
properties even in the case where the underlying graph is not
regular (as is the case in the ILT model).  Let $A$ denote the
adjacency matrix and $D$ denote the diagonal adjacency matrix of a
graph $G$.  Then the normalized Laplacian of $G$ is
\[
\mathcal{L} = I - D^{-1/2}AD^{-1/2}.
\]
Let $0 = \lambda_0 \leq \lambda_1 \cdots \leq \lambda_{n-1} \leq 2$ denote
the eigenvalues of $\mathcal{L}$.  The \emph{spectral gap} of the
normalized Laplacian is $$\lambda = \max\{ |\lambda_1 - 1|,
|\lambda_{n-1} - 1| \}.$$  Chung, Lu, and Vu \cite{clv} observe
that, for random power law graphs with some parameters (effectively
in the case that $d_{\min} = c\log^{2}n$ for some constant $c>0$ and all integers $n>0$), that $\lambda \leq (1 +
o(1)) \frac{4}{\sqrt{d}},$ where $d$ is the average degree.

For the graphs $G_{t}$ generated by the ILT model, we observe that the spectra
behaves quite differently and, in fact, the spectral gap has a
constant order. The following theorem suggests a significant
spectral difference between graphs generated by the ILT model and
random graphs. Define $\lambda(G_t)$ to be the spectral gap of the
normalized Laplacian of $G_t.$
\begin{theorem}\label{expa}  For $t \geq 1$, $\lambda(G_t) > \frac{1}{2}$.
\end{theorem}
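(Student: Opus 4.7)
The plan is to exhibit a test function whose Rayleigh quotient exceeds $3/2$, which forces $\lambda_{n-1}(\mathcal{L}) > 3/2$ and hence $\lambda(G_t) \ge \lambda_{n-1}-1 > 1/2$. Recall the standard identity, valid for any nonzero $f$ with $g := D^{-1/2}f$,
$$
\frac{\langle \mathcal{L}f,f\rangle}{\langle f,f\rangle} \;=\; \frac{\sum_{uv\in E(G_t)}(g(u)-g(v))^2}{\sum_{u}\deg_t(u)\,g(u)^2},
$$
whose supremum over nonzero $g$ equals $\lambda_{n-1}$.

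Let $C\subseteq V(G_t)$ denote the set of $n_{t-1}$ clones introduced at time step $t$. Since $C$ is an independent set, every edge of $G_t$ is either internal to $V(G_{t-1})$ or crosses between $V(G_{t-1})$ and $C$. Because each clone $x'$ is adjacent exactly to $x$ and to $N_{t-1}(x)$, the number of crossing edges equals $\sum_{x\in V(G_{t-1})}(1+\deg_{t-1}(x))=n_{t-1}+2e_{t-1}$. I take $g$ piecewise constant, $g\equiv a$ on $C$ and $g\equiv b$ on $V(G_{t-1})$. Using the degree recurrences~(\ref{one}) and~(\ref{two}), the volumes of the two sides in $G_t$ are $\alpha := \vol_t(C) = 2e_{t-1}+n_{t-1}$ and $\beta := \vol_t(V(G_{t-1})) = 4e_{t-1}+n_{t-1}$, and $\alpha$ also equals the crossing edge count.

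With these, the Rayleigh quotient collapses to $R(a,b) = (a-b)^2\alpha/(a^2\alpha+b^2\beta)$. Writing $r=a/b$ and optimizing the one-variable function $R(r) = (r-1)^2\alpha/(r^2\alpha+\beta)$, the maximum is attained at $r=-\beta/\alpha$ (which also happens to make $g$ orthogonal to the trivial eigenvector in the $D$-weighted inner product), yielding the clean value
$$
R \;=\; \frac{\alpha+\beta}{\beta} \;=\; 1+\frac{\alpha}{\beta}.
$$
Because $2\alpha-\beta = n_{t-1}\ge 1$, one has $\alpha/\beta > 1/2$, so $\lambda_{n-1} \ge R > 3/2$, establishing the theorem.

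The computation is essentially routine once the right test function is in hand, so there is no genuine obstacle—the main subtlety is that the naive choice $g = \pm 1$ on the bipartition $(V(G_{t-1}),C)$ yields only $R = (4e_{t-1}+2n_{t-1})/(3e_{t-1}+n_{t-1})$, which tends to $4/3 < 3/2$ as $e_{t-1}$ grows. Balancing the two sides by their \emph{opposite} volumes (weighting $C$ by $\beta$ and $V(G_{t-1})$ by $-\alpha$) is the key idea, and after that the inequality $2\alpha > \beta$ is immediate from the fact that cloning adds at least one new node.
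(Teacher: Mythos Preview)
Your proof is correct. Both you and the paper exploit the same structural fact---the clones $C$ added at step $t$ form an independent set with $\vol_t(C)=\alpha=2e_{t-1}+n_{t-1}$ and $\vol_t(V(G_{t-1}))=\beta=4e_{t-1}+n_{t-1}$---and both arrive at the identical bound $\lambda(G_t)\ge \alpha/\beta>1/2$. The paper obtains this by invoking the expander mixing lemma (Lemma~\ref{mix}) as a black box: for an independent set $X$ one has $e(X,X)=0$, so the lemma rearranges immediately to $\lambda\ge\vol(X)/\vol(\bar X)$. You instead unpack this via the variational characterization of $\lambda_{n-1}$, choosing the piecewise-constant test function weighted by the opposite volumes, and obtain $\lambda_{n-1}\ge 1+\alpha/\beta$ directly. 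Your route is self-contained and slightly more informative (it pins the large eigenvalue $\lambda_{n-1}>3/2$ as the one responsible for the gap), at the cost of a short optimization; the paper's route is a one-line application of a standard tool. The underlying idea is the same.
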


Theorem~\ref{expa} represents a drastic departure from the good
expansion found in random graphs, where $\lambda = o(1)$
\cite{sgt,clv,fk}, and from the preferential attachment model
\cite{mihail}. If $G_0$ has bad expansion properties, and has $\lambda_{1} < 1/2$
(and thus, $\lambda
> 1/2$) then, in fact, this trend of bad expansion continues as shown by the following
theorem.
\begin{theorem} \label{thm:decgap}  Suppose $G_0$ has at least two nodes, and for $t>0$
let $\lambda_{1}(t)$ be the second eigenvalue of $G_{t}.$  Then we
have that
\[
\lambda_1(t) < \lambda_1(0).
\]
\end{theorem}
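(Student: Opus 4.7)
The plan is to apply the variational (Poincar\'e) characterization
\[
\lambda_1(t) \le \frac{\sum_{\{u,v\}\in E(G_t)}(g(u)-g(v))^2}{\sum_v g(v)^2\deg_t(v) - (\sum_v g(v)\deg_t(v))^2/\vol(G_t)},
\]
valid for any nonconstant $g\colon V(G_t)\to\R$, and to produce a test function by lifting to $G_t$ a minimizer $g_0$ of the analogous Rayleigh quotient on $G_0$ realizing $\lambda_1(0)$. Define the ancestry map $\pi\colon V(G_t)\to V(G_0)$ by iterating $\pi(y')=\pi(y)$ whenever $y'$ is the clone of $y$, and set $g_t(v)=g_0(\pi(v))$.

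Two counting facts drive the computation. Writing $S_t(a)=\pi^{-1}(a)$, the cloning rule gives $|S_t(a)|=2^t$. For a fixed edge $\{a,b\}\in E(G_0)$, the number of $G_t$-edges with one endpoint in $S_t(a)$ and the other in $S_t(b)$ triples at each step -- an edge $uv$ spawns $uv'$ and $u'v$, and no $u'v'$ edge is ever created -- yielding exactly $3^t$ such edges, so the Rayleigh numerator evaluates to $3^t\sum_{\{a,b\}\in E(G_0)}(g_0(a)-g_0(b))^2$. Using recurrences (\ref{one}) and (\ref{two}), the fibre degree-sum $D_t(a)=\sum_{v\in S_t(a)}\deg_t(v)$ satisfies $D_{t+1}(a)=3D_t(a)+2\cdot 2^t$, whence $D_t(a)=(\deg_0(a)+2)\,3^t-2\cdot 2^t$; this determines both $\sum_v g_t(v)^2\deg_t(v)$ and, by summation, $\vol(G_t)$.

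The main obstacle is that $g_t$ is typically not orthogonal to $D_t^{1/2}\mathbf{1}$. Using the orthogonality $\sum_a g_0(a)\deg_0(a)=0$ inherited from $g_0$, a direct computation gives $\sum_v g_t(v)\deg_t(v)=2(3^t-2^t)\,S$ with $S=\sum_a g_0(a)$, which may be nonzero. This is precisely where the Poincar\'e form earns its keep: it absorbs the correction term $(\sum g_td_t)^2/\vol(G_t)$ automatically. Setting $Q=\sum_a g_0(a)^2$, $N=\sum_{\{a,b\}\in E(G_0)}(g_0(a)-g_0(b))^2$, and $M=\sum_a g_0(a)^2\deg_0(a)$ (so that $\lambda_1(0)=N/M$), a short algebraic manipulation reduces $\lambda_1(t)<\lambda_1(0)$ to the clean inequality
\[
2(3^t-2^t)\,S^2 < Q\,\vol(G_t).
\]

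To finish, substitute $\vol(G_t)=3^t(\vol(G_0)+2n_0)-2\cdot 2^t n_0$ and apply Cauchy--Schwarz in the form $S^2\le n_0 Q$: the right-hand side minus the left is at least $3^t Q\,\vol(G_0)$, which is strictly positive because $G_0$ is connected on $n_0\ge 2$ nodes (so $\vol(G_0)\ge 2$) and $g_0\not\equiv 0$ forces $Q>0$. That $\lambda_1(0)>0$, and hence $N>0$, is the standard fact that the normalized Laplacian of a connected graph has simple zero eigenvalue; this justifies dividing by $N$ in the manipulation.
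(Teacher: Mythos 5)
Your proof is correct and follows the same strategy as the paper: lift the harmonic second eigenvector $g_0$ of $G_0$ to $G_t$ along the ancestry map, compute the Rayleigh quotient by counting fibre-crossing edges ($3^t$ per edge of $G_0$, the paper's $|\mathcal{A}_{uv}(t)|=3^t$) and fibre degree-sums, and control the mean-correction term $\bar d^{\,2}\vol(G_t)$ by Cauchy--Schwarz in the form $\left(\sum_a g_0(a)\right)^2\le n_0\sum_a g_0(a)^2$. Your reduction to the single clean inequality $2(3^t-2^t)S^2<Q\,\vol(G_t)$, whose slack is exactly $3^t Q\vol(G_0)>0$, is a slightly tidier way to close than the paper's factorisation $\lambda_1(t)\le\lambda_1(0)\big/\bigl(1+\text{positive}\bigr)$, but the underlying estimate and Cauchy--Schwarz step are the same.
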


Note that Theorem~\ref{thm:decgap} implies that $\lambda_1(1) <
\lambda_1(0)$ and this implies that the sequence $\{\lambda_1(t):
t\geq 0\}$ is strictly decreasing.  This follows since
$G_{t}$ is constructed from $G_{t-1}$ in the same manner as $G_{1}$
is constructed from $G_0$.   If $G_0$ is $K_1$, then there is no
second eigenvalue, but $G_1$ is $K_2.$ Hence, in this case, the
theorem implies that $\{\lambda_1(t): t \geq 1\}$ is strictly
decreasing.

Let $\rho_0(t) \geq |\rho_1(t)| \geq \dots$ denote the eigenvalues
of the adjacency matrix $G_t.$   If
$A$ is the adjacency matrix of $G_t,$ then the adjacency matrix of
$G_{t+1}$ is
\[
M = \left(
\begin{array}{cc}
A & A+I \\
A+I & 0%
\end{array}
\right) ,
\]
where $I$ is the identity matrix of order $n_t$. We note the
following recurrence for the eigenvalues of the adjacency matrix of
$G_{t}$.
\begin{theorem} \label{adj1}
If $\rho$ is an eigenvalue of the adjacency matrix of $G_t$, then
\[
\frac{\rho \pm \sqrt{\rho^{2} + 4(\rho+1)^{2}}}{2},
\]
are eigenvalues of the adjacency matrix of $G_{t+1}$.
\end{theorem}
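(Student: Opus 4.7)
The plan is to exploit the block structure of the given adjacency matrix
$M$ of $G_{t+1}$ by looking for eigenvectors of $M$ of the shape
$\begin{pmatrix} \alpha v \\ \beta v \end{pmatrix}$, where $v$ is an eigenvector of
$A$ with $Av = \rho v$. This is a natural ansatz because the four blocks
of $M$ are each polynomials in $A$ (namely $A$, $A+I$, $A+I$, and $0$), so
every eigenspace of $A$ is invariant under each block, and the problem
should decouple into a $2\times 2$ scalar problem on the $(\alpha,\beta)$
coefficients.

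First I would carry out the block multiplication
$M\begin{pmatrix}\alpha v\\ \beta v\end{pmatrix}$ and, using
$Av=\rho v$, rewrite it as
$\begin{pmatrix}(\alpha\rho + \beta(\rho+1))v\\ \alpha(\rho+1)v\end{pmatrix}$.
Setting this equal to $\mu\begin{pmatrix}\alpha v\\ \beta v\end{pmatrix}$
gives the linear system $\alpha\rho+\beta(\rho+1)=\mu\alpha$ and
$\alpha(\rho+1)=\mu\beta$ in the unknowns $(\alpha,\beta)$. The existence
of a nonzero solution is equivalent to the $2\times 2$ determinant
condition
\[
\det\begin{pmatrix}\rho-\mu & \rho+1\\ \rho+1 & -\mu\end{pmatrix}=0,
\]
which expands to the quadratic $\mu^{2}-\rho\mu-(\rho+1)^{2}=0$.
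Solving this quadratic for $\mu$ yields precisely
$\mu=\frac{\rho\pm\sqrt{\rho^{2}+4(\rho+1)^{2}}}{2}$, and for each of
these two roots $(\alpha,\beta)$ can be chosen nonzero, producing a
genuine eigenvector of $M$.

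There is no serious obstacle here; the only mild thing to be careful
about is to verify that the resulting eigenvector
$\begin{pmatrix}\alpha v\\ \beta v\end{pmatrix}$ is indeed nonzero for
each of the two roots, i.e.\ that $(\alpha,\beta)\neq(0,0)$. This is
immediate from the nontriviality of the kernel of the singular
$2\times 2$ matrix above (equivalently, from the fact that $\rho+1$
appears off-diagonal so one can always choose, say, $\alpha=\rho+1$ and
$\beta=\mu-\rho$, which cannot both vanish since $\mu^{2}=(\rho+1)^{2}
+\rho\mu$ forces $\mu\neq-1$ unless $\rho=-1$, a degenerate case
handled directly). This completes the derivation of the recurrence
claimed in the theorem.
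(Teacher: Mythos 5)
Your proof is correct and follows the same route as the paper: both exploit the block structure of $M$ by seeking eigenvectors of the form $\bigl(\begin{smallmatrix}\alpha \mathbf{u}\\ \beta\mathbf{u}\end{smallmatrix}\bigr)$ for an eigenvector $\mathbf{u}$ of $A$, reduce to a $2\times 2$ scalar condition, and solve the resulting quadratic $\mu^2 - \rho\mu - (\rho+1)^2 = 0$. Your determinant formulation is a cleaner packaging of the same computation and automatically yields a nonzero $(\alpha,\beta)$, which handles the degenerate case $\rho = -1$ (where the paper must argue separately with the vectors $\bigl(\begin{smallmatrix}\mathbf{u}\\ \mathbf{0}\end{smallmatrix}\bigr)$ and $\bigl(\begin{smallmatrix}\mathbf{0}\\ \mathbf{u}\end{smallmatrix}\bigr)$) more uniformly.
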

\noindent We leave the reader to check that the eigenvectors of $G_{t}$ can be
written in terms of the eigenvectors of $G_{t-1}$. As in the Laplacian case, we show that there is a small spectral gap of the adjacency matrix.
\begin{theorem}\label{adj} Let $\rho_0(t) \geq |\rho_1(t)| \geq \dots \geq
|\rho_{n}(t)|$ denote the eigenvalues of the adjacency matrix of
$G_t$.  Then
\[\frac{\rho_0(t)}{|\rho_1(t)|} = \Theta(1). \]
\end{theorem}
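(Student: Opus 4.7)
The plan is to exploit the recurrence in Theorem~\ref{adj1}. Writing $f_{\pm}(\rho) = \frac{\rho \pm \sqrt{\rho^{2}+4(\rho+1)^{2}}}{2}$, these are the two roots of $x^{2}-\rho x-(\rho+1)^{2}=0$, so $f_{+}(\rho)+f_{-}(\rho)=\rho$ and, crucially,
$$f_{+}(\rho)\,|f_{-}(\rho)| = (\rho+1)^{2}.$$

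First I would identify $\rho_0(t+1)=f_{+}(\rho_0(t))$. Since $f_{-}(\mu)\le 0$ for every real $\mu$ while the Perron eigenvalue $\rho_0(t+1)$ is positive, Theorem~\ref{adj1} forces $\rho_0(t+1)=f_{+}(\mu)$ for some eigenvalue $\mu$ of $G_t$. The function $f_{+}$ is increasing on $[0,\infty)$, and the inequality $f_{+}(-s) < f_{+}(s)$ for $s>0$ follows by comparing the radicals $\sqrt{5s^{2}+8s+4} > \sqrt{5s^{2}-8s+4}$; hence the maximum of $f_{+}$ over $\mathrm{spec}(G_t)\subseteq[-\rho_0(t),\rho_0(t)]$ is attained at $\mu=\rho_0(t)$.

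Next, Theorem~\ref{adj1} also delivers $f_{-}(\rho_0(t))$ as an eigenvalue of $G_{t+1}$, so $|\rho_1(t+1)|\ge |f_{-}(\rho_0(t))|$. Combining with the product identity gives
$$\frac{\rho_0(t+1)}{|\rho_1(t+1)|} \;\le\; \frac{f_{+}(\rho_0(t))}{|f_{-}(\rho_0(t))|} \;=\; \frac{f_{+}(\rho_0(t))^{2}}{(\rho_0(t)+1)^{2}}.$$
Because $f_{+}(\rho)\ge \tfrac{1+\sqrt{5}}{2}\rho$ for $\rho\ge 0$, the sequence $\rho_0(t)$ grows at least geometrically to infinity, and $f_{+}(\rho)/(\rho+1)\to \tfrac{1+\sqrt{5}}{2}$ as $\rho\to\infty$. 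Thus the displayed ratio is eventually bounded above by $\tfrac{3+\sqrt{5}}{2}+o(1)$. Combined with the finite value of $\rho_0(t)/|\rho_1(t)|$ for each of the finitely many small $t$, this yields $\rho_0(t)/|\rho_1(t)|=O(1)$. The matching lower bound $\rho_0(t)/|\rho_1(t)|\ge 1$ is immediate from the ordering, so $\rho_0(t)/|\rho_1(t)|=\Theta(1)$.

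The main obstacle is the identification $\rho_0(t+1)=f_{+}(\rho_0(t))$, which demands controlling $f_{+}(\mu)$ uniformly across both signs of the eigenvalues $\mu\in\mathrm{spec}(G_t)$; once this is in hand, the argument collapses to the clean observation that the product relation $f_{+}(\rho)\,f_{-}(\rho)=-(\rho+1)^{2}$ forces $|f_{-}(\rho)|$ to be only a bounded factor smaller than $f_{+}(\rho)$ for large $\rho$, which is exactly the phenomenon producing a constant-order ratio between the top two eigenvalues.
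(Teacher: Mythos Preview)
Your argument is correct and shares the core idea with the paper's proof: both rely on Theorem~\ref{adj1}, both (implicitly in the paper, explicitly in your write-up) use $\rho_0(t+1)=f_+(\rho_0(t))$, and both lower bound $|\rho_1(t+1)|$ by $|f_-(\rho_0(t))|$, the ``minus branch'' descending from the Perron root. Where you diverge is in the final estimate. The paper pins down the growth rate of $\rho_0(t)$ itself, proving $d\phi^t\rho_0(0)\le \rho_0(t)\le c\phi^t\rho_0(0)$ with $\phi=\tfrac{1+\sqrt5}{2}$; the upper bound requires controlling the infinite product $\prod_i\bigl(1+\sqrt{5+6/\rho_0(i)}\bigr)/(1+\sqrt5)$ via the lower bound on $\rho_0(i)$. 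You sidestep this entirely by invoking Vieta's relation $f_+(\rho)\,|f_-(\rho)|=(\rho+1)^2$, which converts the ratio $\rho_0(t+1)/|\rho_1(t+1)|$ directly into $\bigl(f_+(\rho_0(t))/(\rho_0(t)+1)\bigr)^2$ and reduces the problem to a one-line limit. Your route is shorter and yields the clean asymptotic bound $\tfrac{3+\sqrt5}{2}$ for the ratio; the paper's route, while lengthier, extracts more: it actually shows $\rho_0(t)=\Theta(\phi^t)$, a statement about the top eigenvalue itself rather than just the ratio. Two small points you should make explicit: that the $2n_t$ numbers produced by Theorem~\ref{adj1} exhaust the spectrum of $G_{t+1}$ by a dimension count (otherwise your identification of $\rho_0(t+1)$ is incomplete), and that the ``finitely many small $t$'' clause needs $|\rho_1(t)|>0$, which follows since any graph with an edge has trace-zero adjacency matrix and hence a negative eigenvalue.
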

\noindent That is, $\rho_1(t) \geq c|\rho_0(t)|$ for some
constant $c>0.$ Theorem~\ref{adj} is in contrast to the fact that in
$G(n,p)$ random graphs, $|\rho_1| = o(\rho_0)$ (see \cite{sgt}).

In a graph $G$, a set $S$ of nodes is a \emph{dominating} set if every node not
in $S$ has a neighbour in $S$. The \emph{domination number} of $G$, written $\gamma (G)$, is
the minimum cardinality of a
dominating set in $G$. We use $S$ to represent a dominating set in $G$, where each node not in $S$ is joined
to some node of $S$. A graph parameter bounded below by the domination number is the
so-called cop (or search) number of a graph. In Cops and Robbers, there are two players, a set of $s$ \emph{cops} (or \emph{searchers}) $\mathcal{C}$, where $s>0$ is a fixed integer, and the \emph{robber} $\mathcal{R}.$ The cops begin the game by occupying a set of $s$ nodes of a simple, undirected, and finite  graph $G$. While the game may be played on a disconnected graph, without loss of generality, assume that $G$ is connected (since the game is played independently on each component and the number of cops required is the sum over all components). The cops and robber move in \emph{rounds} indexed by non-negative integers. Each round consists of a cop's move followed by a robber's move. More than one cop is allowed to occupy a node, and the players may \emph{pass}; that is, remain on their current nodes. A \emph{move} in a given round for a cop or the robber consists of a pass or moving to an adjacent node; each cop may move or pass in a round. The players know each others current locations; that is, the game is played with \emph{perfect information}. The cops win and the game ends if at least one of the cops can eventually occupy the same node as the robber; otherwise, $\mathcal{R}$ wins. As placing a cop on each node guarantees that the cops win, we may define the \emph{cop number}, written $c(G),$ which is the minimum cardinality of the set of cops needed to win on $G.$ While this node pursuit game played with one cop was introduced in~\cite{nw,q}, the cop number was first introduced in~\cite{af}. For a survey of results on Cops and Robbers, see
\cite{gena}.

We prove that the domination and cop numbers of $G_t$ depend only on the initial graph $G_0$. Theorem~\ref{dom} shows that even as the graph becomes large as $t$ progresses, the same number of nodes needed at time $0$ to dominate the graph will be needed at time $t$.
\begin{theorem} \label{dom}
For all $t \ge 0$,
$$ \gamma(G_t) =\gamma (G_0),$$
and
$$c(G_t)=c(G_0).$$
\end{theorem}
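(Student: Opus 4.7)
The plan is to introduce a natural projection $\pi \colon V(G_t) \to V(G_0)$ and exploit a strong structural property of descendants. Define $\pi$ recursively by taking $\pi$ to be the identity on $V(G_0)$ and setting $\pi(x') = \pi(x)$ whenever $x'$ is the clone of $x$ added at some time-step. A short induction on $t$ shows that if $uv \in E(G_t)$ then either $\pi(u) = \pi(v)$ or $\pi(u)\pi(v) \in E(G_0)$, so $\pi$ is a graph homomorphism in the loop-allowing sense, and $G_0$ remains an induced subgraph of $G_t$ since every new edge involves at least one new clone.

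The key technical step, and the main obstacle, is the structural lemma: for every $t \geq 0$, every $v \in V(G_0)$, and every $y \in \pi^{-1}(v)$, one has $N_t[y] \subseteq N_t[v]$. I would prove this by induction on $t$, splitting the inductive step into two cases. If $y \in V(G_t)$, then $N_{t+1}[y]$ decomposes as $N_t[y]$ together with all new clones $z'$ for which $y \in N_t[z]$; the inductive hypothesis handles the first piece, and for each such $z$ the hypothesis applied to $z$ forces $z \in N_t[v]$, so that $z'$ is also adjacent to $v$ in $G_{t+1}$. If instead $y = x'$ is a newly added clone of some $x \in \pi^{-1}(v) \cap V(G_t)$, then $N_{t+1}[x'] = \{x'\} \cup N_t[x]$, and both pieces land in $N_{t+1}[v]$ by the same considerations. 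In particular, every $y \in \pi^{-1}(v)$ itself lies in $N_t[v]$.

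For the domination number, the bound $\gamma(G_t) \leq \gamma(G_0)$ is a direct induction on $t$: if $S$ dominates $G_t$, then each new clone $x'$ in $G_{t+1}$ is dominated because any dominator of $x$ lies in $N_t[x]$ and hence is adjacent to $x'$. For the reverse inequality, take a minimum dominating set $D$ of $G_t$ and set $D_0 = \pi(D) \subseteq V(G_0)$. Given $v \in V(G_0) \setminus D_0$, some $u \in D$ satisfies $vu \in E(G_t)$, and the homomorphism property of $\pi$ forces either $v = \pi(u) \in D_0$ (impossible) or $v\pi(u) \in E(G_0)$ with $\pi(u) \in D_0$. Thus $D_0$ dominates $G_0$ and $\gamma(G_0) \leq |D_0| \leq |D| = \gamma(G_t)$.

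For the cop number, the inequality $c(G_0) \leq c(G_t)$ follows from the fact that $G_0$ is a retract of $G_t$ via $\pi$, combined with the standard result that the cop number of a retract is bounded above by that of the ambient graph. For $c(G_t) \leq c(G_0)$, I would place $c(G_0)$ cops on $V(G_0) \subseteq V(G_t)$ and run a shadow strategy: the cops simulate their $G_0$-winning strategy against the virtual target $\pi(y)$, where $y$ is the robber's current position in $G_t$. Since $\pi$ sends each robber move either to a pass or to a valid $G_0$-move, the shadow is eventually caught by some cop at $v = \pi(y)$. If $y = v$ the game ends; otherwise the structural lemma yields $N_t[y] \subseteq N_t[v]$, so whichever neighbour $y'$ the robber escapes to lies in $N_t[v]$, and the cop at $v$ captures the robber at $y'$ on the very next cop move.
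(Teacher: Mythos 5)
Your proof is correct, and it takes a genuinely different (and arguably cleaner) route than the paper's. The paper argues entirely step-by-step, showing $\gamma(G_{t+1}) = \gamma(G_t)$ and $c(G_{t+1}) = c(G_t)$ directly, with the cop-number direction $c(G_{t+1}) \ge c(G_t)$ handled by an explicit simulation argument. You instead introduce the ancestor projection $\pi\colon V(G_t)\to V(G_0)$ globally and isolate the real engine of the whole theorem into the structural lemma $N_t[y]\subseteq N_t[\pi(y)]$; that lemma (whose inductive proof you sketch correctly) is exactly what makes both the domination and cop-number shadow arguments go through in one shot rather than one time-step at a time. Your observation that $\pi$ is a retraction onto the induced copy of $G_0$ lets you dispatch $c(G_0)\le c(G_t)$ by appealing to the standard theorem that the cop number of a retract is at most that of the ambient graph, rather than re-deriving that fact as the paper implicitly does. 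The trade-off is that your argument is more conceptual and modular but leans on the retract theorem as an external ingredient, whereas the paper's proof is fully self-contained and elementary; also, your $\gamma(G_t)\le\gamma(G_0)$ direction is still a step-wise induction (necessarily, since a dominating set of $G_0$ dominates $G_t$ only after the same inductive observation), so the two approaches coincide there. One small sharpening worth noting: the paper's phrase that $x$ and $x'$ ``share the exact same neighbours in $G_{t+1}$'' is, read literally, false --- in fact $N_{t+1}[x']\subsetneq N_{t+1}[x]$ --- and your lemma $N_t[y]\subseteq N_t[v]$ records the correct containment precisely.
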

In Theorem~\ref{dom}, we prove that the cop number remains the same for $G_t$. This implies that no matter how large the graph $G_t$ becomes, the robber can be captured by the same number of cops used at time $0$. In terms of OSNs, Theorem~\ref{dom} suggests that users in the network can easily spread and track information (such as gossip) no matter how large the graph becomes.

An \emph{automorphism} of a graph $G$ is an isomorphism from $G$ to itself; the set of all automorphisms forms a group under the operation of composition,
written $ \mathrm{Aut(G)}$. We say that an automorphism $f_t \in \mathrm{Aut}(G_t)$ \emph{extends} to $f_{t+1} \in \mathrm{Aut}(G_{t+1})$ if $$f_{t+1} \upharpoonright V(G_t) =f_t;$$
that is, the restriction of the map $f_{t+1}$ to $V(G_t)$ equals $f_t.$ We show that symmetries from $t=0$ are preserved at time $t$. This provides further evidence that the ILT model retains a memory of the initial graph from time $0.$

\begin{theorem} \label{embed}
For all $t \ge 0 $, $\mathrm{Aut}(G_0)$ embeds in $\mathrm{Aut}(G_t)$.
\end{theorem}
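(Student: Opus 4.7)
The plan is to proceed by induction on $t$, so the heart of the matter is to exhibit an injective group homomorphism $\Phi_t\colon \mathrm{Aut}(G_t)\hookrightarrow \mathrm{Aut}(G_{t+1})$ for every $t\ge 0$; composing these injections gives the desired embedding of $\mathrm{Aut}(G_0)$ into $\mathrm{Aut}(G_t)$ for all $t$.

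The construction of $\Phi_t$ is the natural one dictated by the model: for each $x\in V(G_t)$ write $x'$ for its clone in $G_{t+1}$, and for $f\in\mathrm{Aut}(G_t)$ define
\[
\Phi_t(f)(x)=f(x),\qquad \Phi_t(f)(x')=f(x)'.
\]
Since cloning produces exactly one new node per old node, $\Phi_t(f)$ is a well-defined bijection on $V(G_{t+1})$. I would then verify that $\Phi_t(f)$ preserves adjacency by checking the three types of edges that the ILT rule produces in $G_{t+1}$: (i) old edges $xy\in E(G_t)$, (ii) clone edges $xx'$, and (iii) cross edges $x'y$ with $xy\in E(G_t)$. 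In each case, the image under $\Phi_t(f)$ lands on an edge of the same type because $f$ is an automorphism of $G_t$, and conversely non-edges (in particular, pairs $x'y'$ of distinct clones, which form an independent set) map to non-edges. So $\Phi_t(f)\in\mathrm{Aut}(G_{t+1})$.

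It remains to check that $\Phi_t$ is an injective group homomorphism. For injectivity, note that $\Phi_t(f)\!\upharpoonright\! V(G_t)=f$, so different $f$'s yield different extensions; in particular, $\Phi_t(f)$ extends $f$ in the sense of the definition preceding the theorem. For the homomorphism property, a direct computation on both old and cloned nodes gives $\Phi_t(f\circ g)=\Phi_t(f)\circ\Phi_t(g)$, using that $\Phi_t(g)$ sends $V(G_t)$ to $V(G_t)$ and clones to clones. Composing the embeddings $\Phi_0,\Phi_1,\dots,\Phi_{t-1}$ yields an injective homomorphism $\mathrm{Aut}(G_0)\hookrightarrow\mathrm{Aut}(G_t)$, completing the induction.

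I do not anticipate a serious obstacle; the only subtlety is to be careful about the definition of an extension and to check that clones are preserved as a class (rather than, say, an old node being sent to a clone). This is immediate from the definition, because $\Phi_t(f)$ sends $V(G_t)$ to $V(G_t)$ and the independent set of clones to itself. Consequently, the proof is a clean structural argument rather than a computation, and it makes precise the intuition that the ILT construction is functorial with respect to the initial graph.
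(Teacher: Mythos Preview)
Your proposal is correct and matches the paper's proof essentially line for line: the paper defines the same extension map $\phi(f)(x)=f(x)$, $\phi(f)(y')=(f(y))'$, verifies it lands in $\mathrm{Aut}(G_{t+1})$ by the same case analysis on edge types (old, clone, cross, and the independent set of clones), and then checks injectivity via restriction and the homomorphism identity $\phi(fg)=\phi(f)\phi(g)$ on old and cloned nodes separately. The only cosmetic difference is that the paper separates the ``$\Phi_t(f)$ is an automorphism'' step into a standalone lemma before proving the group-embedding statement, whereas you fold it into a single argument.
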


As shown in Theorem~\ref{adeg}, the ILT model has a fixed densification exponent equalling $\log 3 / \log 2$. We consider a randomized version of the model which allows for this exponent to become tuneable. To motivate the model, in OSNs some new users are friends outside of the OSN. Such users immediately seek each other out as they join the OSN and become friends there. The stochastic model ILT($p$) is defined as follows. Define $H_0$ to be $K_1.$ A sequence $(H_t: t\in \mathbb{N})$ of graphs is generated so that for all $t$, $H_t$ is an induced subgraph of $H_{t+1}.$ At time $t+1,$ first clone all the nodes of $H_t$ as in the deterministic ILT model. Let $n$ be the number new nodes are added at time $t+1.$ (Note that $n$ is a function of $t$ and is not a new parameter.) To form $H_{t+1}$, add edges independently between the new nodes with probability $p=p(n).$ Hence, the new nodes form a random graph $G(n,p)$.

Several properties of the ILT model are inherited by the ILT($p$) model. For example, as we are adding edges to the graphs generated by the ILT model, the average distance may only decrease, and the clustering coefficient may only increase. The following theorem proves that ILT($p$) generates graphs following a densification power law with exponent $\log (3+\delta)/\log 2, $ where $0 \le \delta \le 1$.  For $T$ a positive integer representing time, we say that an event holds \emph{asymptotically
almost surely} (\emph{a.a.s.}) if the probability that it holds
tends to $1$ as $T$ tends to infinity.
\begin{theorem}\label{random_dens}
Let $0 \le \delta \le 1$, and define
\begin{equation}\label{eq:p}
p(n) = \delta n^{\frac{\log (3+\delta)}{\log 2}} / n^2.
\end{equation} Then a.a.s.
\begin{equation*}
\vol (H_T) = (1+o(1))(3+\delta)^{T}.
\end{equation*}
\end{theorem}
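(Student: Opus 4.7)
The plan is to unroll the volume recurrence into an explicit sum, compute its expectation, and then apply Chebyshev's inequality. Write $N_t = |V(H_t)| = 2^t$, and let $X_{t+1} \sim \mathrm{Bin}(\binom{N_t}{2}, p(N_t))$ denote the number of random edges placed among the $N_t$ clones at time $t+1$; by construction the family $(X_{t+1})_{t \ge 0}$ is independent. The deterministic cloning step triples every edge and adds one clone--original edge per existing node, so
\begin{equation*}
\vol(H_{t+1}) = 3\vol(H_{t}) + 2N_t + 2X_{t+1}.
\end{equation*}
Iterating from $\vol(H_0)=0$ yields the clean decomposition
\begin{equation*}
\vol(H_T) = 2(3^T - 2^T) + 2\sum_{t=0}^{T-1} 3^{T-1-t}X_{t+1}.
\end{equation*}

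Next, I would compute the expectation. From (\ref{eq:p}),
$$\E[X_{t+1}] = \binom{2^t}{2}p(2^t) = \frac{1-2^{-t}}{2}\,\delta(3+\delta)^t,$$
so the expected value of the random sum is a geometric sum with ratio $(3+\delta)/3 > 1$ whose leading part telescopes to $(3+\delta)^T - 3^T$. Combined with the deterministic term,
$$\E[\vol(H_T)] = (3+\delta)^T + 3^T - 2^{T+1} + O(3^T) = (1+o(1))(3+\delta)^T,$$
where the last equality uses $3 + \delta > 3 > 2$.

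The crux of the argument is the concentration bound. Using independence of the $X_{t+1}$ and the crude estimate $\mathrm{Var}(X_{t+1}) \le \E[X_{t+1}]$,
$$\mathrm{Var}(\vol(H_T)) = 4\sum_{t=0}^{T-1} 9^{T-1-t}\mathrm{Var}(X_{t+1}) \le 2\delta\cdot 9^{T-1}\sum_{t=0}^{T-1}\left(\frac{3+\delta}{9}\right)^t = O(9^T).$$
The inequality $(3+\delta)^2 > 9$, valid for every $\delta > 0$, is precisely what guarantees that
$$\frac{\mathrm{Var}(\vol(H_T))}{\E[\vol(H_T)]^2} = O\left(\left(\frac{9}{(3+\delta)^2}\right)^T\right) \to 0$$
exponentially fast, and Chebyshev's inequality then delivers $\vol(H_T) = (1+o(1))(3+\delta)^T$ a.a.s. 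The main obstacle is not the bookkeeping but rather verifying the variance bound: the random contribution at time $t+1$ has typical magnitude proportional to $(3+\delta)^{t/2}$ but is amplified by $3^{T-1-t}$ through subsequent cloning, and it is essential that the chosen exponent in (\ref{eq:p}) leaves a strict geometric gap between $\sqrt{\mathrm{Var}(\vol(H_T))}$ and $\E[\vol(H_T)]$ for every $\delta > 0$.
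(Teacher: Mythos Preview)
Your argument is correct and takes a genuinely different route from the paper. You exploit the fact that $n_t=2^t$ is deterministic, so unrolling the recurrence expresses $\vol(H_T)$ as a deterministic term plus a linear combination of \emph{independent} binomial variables; a single second-moment computation then suffices. The paper instead argues step by step: it introduces a burn-in time $t_0(T)=\Theta(\log\log T)$ so that $\E X_{t+1}\ge \delta\log^4 T$ for $t\ge t_0$, applies a Chernoff bound to concentrate $X_{t+1}$ at each subsequent step with failure probability $T^{-2}$, tracks the relative error $A(t_0)$ through an induction showing it decays like $(3/(3+\delta))^{t-t_0}$, and finishes with a union bound over $T$ steps. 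Your approach is shorter and avoids the burn-in machinery entirely; the paper's step-by-step Chernoff route in principle yields exponential tail bounds and a trajectory statement (concentration of $\vol(H_t)$ for all $t_0\le t\le T$ simultaneously), though only the endpoint is actually claimed. Both proofs implicitly use $\delta>0$ (the paper writes ``without loss of generality, we assume that $0<p<1$''), which is also where your key inequality $(3+\delta)^2>9$ comes in.
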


Hence, by choosing an appropriate $p$, the densification power law exponent in graphs generated by the ILT($p$) model may achieve any value in the interval $[\log 3/ \log 2, 2]$. We also prove that for the normalized Laplacian, the ILT($p$) model maintains a large spectral gap.

\begin{theorem}\label{random_gap}
A.a.s.
$$\lambda (H_T)= \Omega(1).$$
\end{theorem}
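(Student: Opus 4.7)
The plan is to mimic the bad-expansion argument behind Theorem~\ref{expa} at the last time-step: the set $S$ of $n:=n_{T-1}=2^{T-1}$ nodes added at time $T$ should again play the role of a ``community'' with a small edge boundary, and a constant-on-parts test vector will push the Rayleigh quotient of the normalized Laplacian of $H_T$ above $1$ by a constant, witnessing $\lambda_{n-1}(H_T)-1=\Omega(1)$ and hence $\lambda(H_T)=\Omega(1)$.

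First, I would separate the deterministic and random contributions. With $\bar S=V(H_{T-1})$, each clone $x'\in S$ is joined in $\bar S$ to exactly $x$ and the vertices of $N_{T-1}(x)$, so the cut size $e(S,\bar S)=\vol(H_{T-1})+n$ is deterministic. The only randomness sits in $e(S,S)\sim\mathrm{Bin}\bigl(\binom{n}{2},p\bigr)$, whose expectation is $(1+o(1))\frac{\delta}{2}(3+\delta)^{T-1}$, and which concentrates around this mean by a Chernoff bound. Combining this with Theorem~\ref{random_dens} applied at times $T-1$ and $T$ via a union bound over $O(1)$ events yields, a.a.s.,
\begin{align*}
e(S,\bar S) &= (1+o(1))(3+\delta)^{T-1},\\
\vol(S) &= e(S,\bar S)+2\,e(S,S) = (1+\delta+o(1))(3+\delta)^{T-1},\\
\vol(\bar S) &= \vol(H_T)-\vol(S) = (2+o(1))(3+\delta)^{T-1},\\
\vol(H_T) &= (1+o(1))(3+\delta)^T.
\end{align*}

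Next, I would apply the Rayleigh characterization $\lambda_{n-1}(H_T)=\max_g R(g)$ for
$$R(g)=\frac{\sum_{uv\in E(H_T)}(g(u)-g(v))^2}{\sum_v g(v)^2 d(v)},$$
using the test vector $g(v)=a$ for $v\in S$ and $g(v)=b$ for $v\in\bar S$ with $b=-a\,\vol(S)/\vol(\bar S)$. Because edges inside $S$ or inside $\bar S$ contribute $0$ to the numerator, a short computation collapses this to
$$R(g)=\frac{\vol(H_T)\,e(S,\bar S)}{\vol(S)\,\vol(\bar S)}=\frac{3+\delta}{2(1+\delta)}(1+o(1))\qquad\text{a.a.s.,}$$
which strictly exceeds $1$ whenever $\delta<1$, with gap $(1-\delta)/(2(1+\delta))-o(1)$. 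Therefore $\lambda(H_T)\geq\lambda_{n-1}(H_T)-1=\Omega(1)$.

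The main obstacle is essentially the concentration bookkeeping: one needs the Chernoff estimate for $e(S,S)$ and the two instances of Theorem~\ref{random_dens} (at times $T-1$ and $T$) to all hold simultaneously a.a.s., which is immediate by a union bound over a constant number of events. The edge case $\delta=1$ is genuinely degenerate---$p=1$ forces $H_T$ to be the complete graph $K_{2^T}$ with vanishing normalized-Laplacian spectral gap---so the theorem is really a statement for $\delta$ bounded away from~$1$.
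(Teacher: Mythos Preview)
Your approach is essentially the same as the paper's: both use the bipartition $S=V(H_T)\setminus V(H_{T-1})$ versus $\bar S=V(H_{T-1})$, the same a.a.s.\ volume estimates from Theorem~\ref{random_dens}, and the same Chernoff concentration for the random edges inside $S$. The only cosmetic difference is that the paper invokes the expander mixing lemma (Lemma~\ref{mix}) on this cut rather than exhibiting your two-valued Rayleigh test vector directly; both routes yield a constant lower bound on $\lambda(H_T)$ that degenerates at $\delta=1$, and the paper likewise restricts to $0<p<1$ at the outset.
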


\section{Proofs of Results}

This section is devoted to the proofs of the theorems outlined in Section~1.

\subsection{Proof of Theorem~\ref{adeg}}

We now consider the number of edges and average degree of $G_{t},$
and prove the following densification power law for the ILT model.
Define the \emph{volume} of $G_{t}$ by
$$
\mathrm{vol}(G_{t})=\sum_{x\in V(G_{t})}\deg _{t}(x) =2e_t.
$$

The proof of Theorem~\ref{adeg} follows directly from the following
Lemma~\ref{lemm}, since the average degree of $G_t$ is
$\mathrm{vol}(G_t)/n_t.$

\begin{lemma}
\label{lemm}For $t>0,$
$$
\mathrm{vol}(G_{t})=3^{t}\mathrm{vol}(G_{0})+2n_{0}(3^{t}-2^{t}).
$$
In particular,
\begin{eqnarray*}
e_{t} &=&3^{t}(e_{0}+n_{0})-n_{t}.
\end{eqnarray*}
\end{lemma}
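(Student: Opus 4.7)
The plan is to derive a linear recurrence for $\mathrm{vol}(G_{t+1})$ in terms of $\mathrm{vol}(G_t)$ and $n_t$, then solve it and convert to the edge formula via $e_t = \mathrm{vol}(G_t)/2$.

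First I would exploit the partition $V(G_{t+1}) = V(G_t) \,\dot\cup\, V(G_t)'$, where $V(G_t)'$ is the set of clones introduced at time $t+1$. Summing the two degree recurrences~(\ref{one}) and~(\ref{two}) over all $x \in V(G_t)$ gives
\begin{align*}
\mathrm{vol}(G_{t+1}) &= \sum_{x \in V(G_t)} \deg_{t+1}(x) + \sum_{x \in V(G_t)} \deg_{t+1}(x') \\
&= \sum_{x \in V(G_t)} (2\deg_t(x) + 1) + \sum_{x \in V(G_t)} (\deg_t(x) + 1) \\
&= 3\,\mathrm{vol}(G_t) + 2n_t.
\end{align*}
Using the already noted fact that $n_t = 2^t n_0$, this becomes the first-order linear recurrence $\mathrm{vol}(G_{t+1}) = 3\,\mathrm{vol}(G_t) + 2^{t+1} n_0$.

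Next I would solve this recurrence. The homogeneous solution has the form $C \cdot 3^t$, and a particular solution of the form $A \cdot 2^t$ yields $A = -2n_0$. Matching the initial condition at $t=0$ gives $C = \mathrm{vol}(G_0) + 2n_0$, whence
\begin{equation*}
\mathrm{vol}(G_t) = 3^t\bigl(\mathrm{vol}(G_0) + 2n_0\bigr) - 2^{t+1} n_0 = 3^t \mathrm{vol}(G_0) + 2n_0(3^t - 2^t),
\end{equation*}
which is the desired identity. (Equivalently, one can verify the closed form by induction on $t$ using the recurrence, avoiding the ansatz; this is the route I would actually write out in the final proof.)

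Finally, dividing through by $2$ and substituting $\mathrm{vol}(G_0) = 2e_0$ gives
\begin{equation*}
e_t = 3^t(e_0 + n_0) - 2^t n_0 = 3^t(e_0 + n_0) - n_t,
\end{equation*}
as required. No step here is a genuine obstacle; the only subtlety worth double-checking is the bookkeeping in the first display — specifically, that each clone $x'$ is adjacent to $x$ \emph{and} to every neighbour of $x$ at time $t$, which is exactly what produces the ``$+1$'' in both recurrences and thus the additive $2n_t$ term in the volume recurrence.
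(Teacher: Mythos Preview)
Your proposal is correct and essentially identical to the paper's proof: both derive the recurrence $\mathrm{vol}(G_{t+1})=3\,\mathrm{vol}(G_t)+2n_t$ (which the paper writes as $3\,\mathrm{vol}(G_t)+n_{t+1}$) from~(\ref{one}) and~(\ref{two}), and then solve it. The only cosmetic difference is that the paper unrolls the recurrence and sums the resulting geometric series, whereas you use a homogeneous-plus-particular ansatz (or induction); these are interchangeable routine steps.
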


\begin{proof}
By (\ref{one}) and (\ref{two}) we have that
\begin{eqnarray}
\mathrm{vol}(G_{t+1}) &=&\sum_{x\in V(G_{t})}\deg
_{t+1}(x)+\sum_{x^{\prime }\in
V(G_{t+1})\backslash V(G_{t})}\deg _{t+1}(x^{\prime }) \nonumber \\
&=&\sum_{x\in V(G_{t})}(2\deg _{t}(x)+1)+\sum_{x\in V(G_{t})}(\deg
_{t}(x)+1) \nonumber
\\
&=&3\mathrm{vol}(G_{t})+n_{t+1}. \label{recc}
\end{eqnarray}
Hence by (\ref{recc}) for $t>0,$
\begin{eqnarray*}
\mathrm{vol}(G_{t}) &=&3\mathrm{vol}(G_{t-1})+n_{t} \\
&=&3^{t}\mathrm{vol}(G_{0})+n_{0}\left( \sum_{i=0}^{t-1}3^{i}2^{t-i}\right) \\
&=&3^{t}\mathrm{vol}(G_{0})+2n_{0}(3^{t}-2^{t}),
\end{eqnarray*}
where the third equality follows by summing a geometric series.
\end{proof}

\subsection{Proof of Theorem~\ref{adist}}

When computing distances in the ILT model, the following lemma is
helpful.

\begin{lemma}\label{k}
Let $x$ and $y$ be nodes in $G_{t}\ $with $t>0.$ Then
$$
d_{t+1}(x^{\prime },y)=d_{t+1}(x,y^{\prime
})=d_{t+1}(x,y)=d_{t}(x,y),
$$
and
$$
d_{t+1}(x^{\prime },y^{\prime })=\left\{
\begin{array}{cc}
d_{t}(x,y) & \text{if }xy\notin E(G_{t}) ,\\
d_{t}(x,y)+1=2 & \text{if }xy\in E(G_{t}).
\end{array}
\right.
$$
\end{lemma}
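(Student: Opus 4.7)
The plan is to combine a short structural argument for the lower bounds with explicit path constructions for the upper bounds. The key observation is that $G_t$ sits inside $G_{t+1}$ as an induced subgraph, and the edges of $G_{t+1}$ split into three types: old edges of $G_t$, the matching edges $zz'$ joining each node to its clone, and edges of the form $uy'$ where $u \in N_t(y)$.

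First, I would introduce the projection $\pi \colon V(G_{t+1}) \to V(G_t)$ defined by $\pi(z)=z$ on $V(G_t)$ and $\pi(z')=z$ on the clones. Inspecting the three edge types above, every edge of $G_{t+1}$ either projects to an edge of $G_t$ or collapses to a single vertex, the latter happening exactly for the matching edges $zz'$. Consequently, any walk of length $k$ in $G_{t+1}$ projects to a walk in $G_t$ of length at most $k$, after suppressing repeated consecutive vertices. Applying this to shortest paths realizing the four distances on the left-hand sides yields the lower bounds $d_t(x,y) \le d_{t+1}(x,y)$, $d_t(x,y) \le d_{t+1}(x,y')$, $d_t(x,y) \le d_{t+1}(x',y)$, and $d_t(x,y) \le d_{t+1}(x',y')$.

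For the matching upper bounds, let $x = u_0, u_1, \dots, u_k = y$ be a shortest $xy$-path in $G_t$, where $k = d_t(x,y) \ge 1$. Using this path inside $G_{t+1}$ handles $d_{t+1}(x,y)$. Replacing the last node $u_k = y$ by its clone $y'$ and invoking the edge $u_{k-1} y' \in E(G_{t+1})$, available since $u_{k-1} \in N_t(y)$, produces a path of length $k$ from $x$ to $y'$; the symmetric construction handles $d_{t+1}(x',y)$. When $xy \notin E(G_t)$ we have $k \ge 2$, and the walk $x', u_1, \dots, u_{k-1}, y'$ is a valid path of length $k$ from $x'$ to $y'$, since $u_1 \in N_t(x)$ gives $x'u_1 \in E(G_{t+1})$, and similarly at the other end.

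The one subtlety, which I regard as the main (if minor) obstacle, is the edge case $xy \in E(G_t)$: the construction above would try to connect $x'$ directly to $y'$, but clones form an independent set, so $x'y' \notin E(G_{t+1})$. Here I would use the two-step walk $x', x, y'$, valid because $xy' \in E(G_{t+1})$ (as $x \in N_t(y)$), yielding $d_{t+1}(x',y') \le 2$. Independence of the clones rules out distance $1$, giving $d_{t+1}(x',y') = 2 = d_t(x,y) + 1$ as claimed, and completing the analysis.
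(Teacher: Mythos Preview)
Your proof is correct and follows essentially the same approach as the paper's: both establish the lower bounds by replacing clones with their originals (you do this globally via the projection $\pi$, the paper does it locally via a minimal-counterexample swap), and the upper bounds by exhibiting explicit short walks in $G_{t+1}$ built from a shortest $xy$-path in $G_t$. Your version is a bit more complete, since the paper only writes out the case $d_{t+1}(x,y)=d_t(x,y)$ and declares the rest analogous, whereas you handle all four distances and the $xy\in E(G_t)$ edge case explicitly.
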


\begin{proof} We prove that $d_{t+1}(x,y)=d_{t}(x,y)$. The proofs of the other equalities are analogous and so omitted. Since in the ILT model we do not delete any edges, the distance cannot increase after a ``cloning" step occurs. Hence,
$d_{t+1}(x,y) \le d_{t}(x,y).$ Now suppose for a contradiction that there is a path $P'$ connecting $x$ and $y$ in $G_{t+1}$ with length $k < d_{t}(x,y).$
Hence, $P'$ contains nodes not in $G_t$. Choose such a $P'$ with the least number of nodes, say $s>0$, not in $G_t$. Let $z'$ be a node of $P'$ not in $G_t$, and let the neighbours
of $z'$ in $P'$ be $u$ and $v.$ Then $z \in V(G_t)$ is joined to $u$ and $v.$ Form the path $Q'$ by replacing $z'$ by $z$. But then $Q'$ has length $k$ and has $s-1$ many nodes not in $G_t$, which supplies a contradiction. \end{proof}

We now turn to the proof of Theorem~\ref{adist}. We only prove item
(1), noting that items (2) and (3) follow from (1) by computation.
We derive a recurrence for $W(G_{t})$ as follows. To compute
$W(G_{t+1}),$ there are five cases to consider: distances within $
G_{t},$ and distances of the forms: $d_{t+1}(x,y^{\prime }),$
$d_{t+1}(x^{\prime },y),$ $ d_{t+1}(x,x^{\prime }),$ and
$d_{t+1}(x^{\prime },y^{\prime }).$ The first three cases contribute
$3W(G_{t})$ by Lemma~\ref{k}. The 4th case contributes $n_{t}.$ The
final case contributes $W(G_{t})+e_{t}$ (the term $e_{t}$ comes from
the fact that each edge $xy$ contributes $d_{t}(x,y)+1).$

Thus,
\begin{eqnarray*}
W(G_{t+1}) &=&4W(G_{t})+e_{t}+n_{t} \\
&=&4W(G_{t})+3^{t}(e_{0}+n_{0}).
\end{eqnarray*}

Hence,
\begin{eqnarray*}
W(G_{t}) &=&4^{t}W(G_{0})+ \sum_{i=0}^{t-1}4^{i}\left(
3^{t-1-i}\right) (e_{0}+n_{0})  \\
&=&4^{t}W(G_{0})+4^{t}(e_{0}+n_{0})\left( 1-\left(
\frac{3}{4}\right) ^{t}\right). \qed \qedhere
\end{eqnarray*}

Diameters are constant in the ILT model. We record this as a strong
indication of the (ultra) small world property in the model.

\begin{lemma}
For all graphs $G_0$ different than a clique,
$$
\mathrm{diam}(G_{t})=\mathrm{diam}(G_{0}),
$$
and $\mathrm{diam}(G_{t})=\mathrm{diam}(G_{0})+1=2$ when $G_{0}$ is
a clique.
\end{lemma}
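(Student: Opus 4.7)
The plan is to prove the lemma by induction on $t$, with Lemma~\ref{k} doing essentially all the work. The key observation is that cloning can strictly increase a pairwise distance only in the subcase of two clones $x',y'$ whose originals are adjacent in $G_t$, where $d_{t+1}(x',y')=2$ rather than $1$. Such an increase is harmless precisely when $\mathrm{diam}(G_t)\ge 2$, and this is the reason for splitting the statement into a clique and a non-clique case.

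For the non-clique case, set $D=\mathrm{diam}(G_0)\ge 2$ and assume inductively that $\mathrm{diam}(G_t)=D$. Each pair in $V(G_{t+1})$ falls into one of four types: two old nodes, an old node and some clone, a node and its own clone, or two distinct clones. Lemma~\ref{k} gives
$$
d_{t+1}(x,y)=d_{t+1}(x',y)=d_{t+1}(x,y')=d_t(x,y)\le D
$$
and $d_{t+1}(x,x')=1$; finally, $d_{t+1}(x',y')$ equals $d_t(x,y)\le D$ when $xy\notin E(G_t)$, and equals $2\le D$ when $xy\in E(G_t)$. Hence $\mathrm{diam}(G_{t+1})\le D$. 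For the matching lower bound, I would pick $u,v\in V(G_0)$ with $d_0(u,v)=D$ and iterate the first equality of Lemma~\ref{k} to conclude $d_{t+1}(u,v)=d_t(u,v)=D$, closing the induction.

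For the clique case $G_0=K_{n_0}$ with $n_0\ge 2$, I would compute $G_1$ directly from Lemma~\ref{k}: since every pair of originals is adjacent, $d_1(x',y')=d_0(x,y)+1=2$ for every distinct pair of clones, while every other pair sits at distance $1$. Thus $\mathrm{diam}(G_1)=2=\mathrm{diam}(G_0)+1$. Because the clones form an independent set of size $n_0\ge 2$, $G_1$ is not a clique, and applying the non-clique case with $G_1$ as the initial graph yields $\mathrm{diam}(G_t)=2$ for all $t\ge 1$.

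I do not expect any serious obstacle; the argument is essentially a case check against Lemma~\ref{k}. The only delicate point is the clone--clone/adjacent-originals case, where the extra $+1$ appears: once $D\ge 2$ this $+1$ cannot push us above $D$, which is exactly why cliques (with $D=1$) are exceptional and why the exceptional clause in the statement takes the form it does. The lower-bound step also tacitly uses that $V(G_0)\subseteq V(G_t)$ is an isometric subgraph, which itself follows by iterating the first equality in Lemma~\ref{k}.
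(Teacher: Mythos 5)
Your proposal is correct and takes the same approach as the paper, whose proof of this lemma is the single line ``This follows directly from Lemma~\ref{k}.'' You have simply supplied the case analysis and induction that the paper leaves implicit, correctly isolating the clone--clone/adjacent-originals subcase as the only one where a distance can increase, and correctly noting that this is harmless precisely when $\mathrm{diam}(G_0)\ge 2$.
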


\begin{proof} This follows directly from Lemma~\ref{k}. \end{proof}

\subsection{Proof of Theorem~\ref{cluster}}

We introduce the following dependency structure that will help us
classify the degrees of nodes. Given a node $x \in V(G_0)$ we define
its \emph{descendant tree at time} $t$, written $T(x,t)$, to be a
rooted binary tree with root $x$, and whose leaves are all of the
nodes at time $t$. To define the $(k+1)$th row of $T(x,t)$, let $y$
be a node in the $k$th row ($y$ corresponds to a node in $G_k$).
Then $y$ has exactly two descendants on row $k+1$: $y$ itself and
$y^{\prime }.$ In this way, we may identify the nodes of $G_{t}$
with a length $t$ binary sequence corresponding to the descendants
of $x,$ using the convention that a clone is labelled $1.$ We refer
to such a sequence as the \emph{binary sequence for} $x$ at time
$t.$ We need the following technical lemma.

\begin{lemma}\label{degree}
Let $S(x,k,t)$ be the nodes of $T(x,t)$ with exactly $k$ many $0$'s
in their binary sequence at time $t.$ Then for all $y\in S(x,k,t)$
$$
2^k (\deg_0(x)+1)+t-k-1 ~\leq~ \deg _{t}(y) ~\leq~ 2^k
(\deg_0(x)+t-k+1)-1.
$$
\end{lemma}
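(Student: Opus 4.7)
The plan is a direct induction on $t$ using the degree recurrences~(\ref{one}) and~(\ref{two}). The base case $t=0$ is immediate: $S(x,0,0)=\{x\}$ and both bounds collapse to $\deg_0(x)$. For the inductive step, fix $y\in S(x,k,t+1)$ and let $z\in V(G_t)$ be the parent of $y$ in $T(x,t+1)$, so that $y\in\{z,z'\}$ and the length-$t$ prefix of $y$'s binary sequence is that of $z$.

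If the last bit of $y$'s sequence is $0$ (so $y=z$ regarded as a node of $G_{t+1}$), then $z\in S(x,k-1,t)$ and $\deg_{t+1}(y)=2\deg_t(z)+1$ by~(\ref{one}); substituting the two inductive bounds for $\deg_t(z)$ and simplifying recovers exactly the desired bounds for $\deg_{t+1}(y)$ (the lower bound needs only $k\leq t+1$; the upper bound comes out as an identity). If the last bit is $1$ (so $y=z'$), then $z\in S(x,k,t)$ and $\deg_{t+1}(y)=\deg_t(z)+1$ by~(\ref{two}); again the inductive bounds plug through directly (the lower bound is an identity; the upper bound needs only $2^k\geq 1$). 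This closes the induction.

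A conceptual way to see the same bounds, and to understand why a range rather than a single value appears, is via the substitution $e_t(y):=\deg_t(y)+1$, which converts~(\ref{one}) and~(\ref{two}) into $e\mapsto 2e$ at each $0$-step and $e\mapsto e+1$ at each $1$-step along the root-to-leaf path of $y$ in $T(x,t)$. Thus $e_t(y)$ is obtained by composing $k$ doublings and $t-k$ increments in the order dictated by $y$'s binary sequence, starting from $\deg_0(x)+1$. A swap lemma (``double then increment'' gives $2e+1$, whereas ``increment then double'' gives $2e+2$), combined with the monotonicity of both maps, shows that the extremes over all orderings are attained by placing all doublings first (minimum $2^k(\deg_0(x)+1)+(t-k)$) and all increments first (maximum $2^k(\deg_0(x)+1+t-k)$); subtracting~$1$ then recovers the stated inequalities.

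There is no real obstacle: the induction is essentially bookkeeping. The only point requiring care is keeping the convention straight, namely that a $0$-bit in the binary sequence records a ``stay'' (triggering recurrence~(\ref{one})) while a $1$-bit records a ``clone'' (triggering~(\ref{two})), since the non-commutativity of these two maps is exactly what produces the gap between the asserted upper and lower bounds.
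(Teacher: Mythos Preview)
Your proof is correct. The paper takes a slightly different route: rather than inducting on $t$, it simply names the two extremal binary sequences (all $k$ zeros first for the minimum, all $t-k$ ones first for the maximum) and computes the degree in each case directly from the recurrences. The paper does not explicitly justify \emph{why} these sequences are extremal, relying on the reader's intuition that doubling early and incrementing late minimizes, and vice versa. Your induction sidesteps that gap entirely, and your second paragraph (the substitution $e=\deg+1$ together with the swap observation $2e+1<2e+2$) is precisely the monotonicity argument that the paper leaves implicit. So the paper's proof is quicker to read once the extremality is accepted, while yours is self-contained; the two are morally the same argument presented at different levels of rigor.
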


\begin{proof}
The degree $\deg _{t}(y)$ is minimized when $y$ is identified with the binary sequence beginning with $k$ many $0$'s: $(0,\ldots ,0,1,1,\ldots ,1).$ In this case,
\begin{eqnarray*}
\deg _{t}(y) &=&2(2(\dots(2(2 \deg_0(x)+1)+1)\dots)+1)+1 + (t-k)\\
&=&2^k (\deg_0(x)+1)+t-k-1.
\end{eqnarray*} The degree $\deg _{t}(y)$ is maximized when the
sequence with the $k$ many $0$'s at the end of the sequence: $(1,1,\ldots ,1,0,\ldots ,0).$ Then
\begin{eqnarray*}
\deg _{t}(y) &=&2(2(\dots(2(\deg_0(x)+t-k)+1)\dots)+1)+1\\
&=&2^k (\deg_0(x)+t-k+1)-1. \qedhere
\end{eqnarray*}
\end{proof}

It can be shown (using Lemma~\ref{degree}) that the number of nodes of
degree at least $j$ at time $t,$ denoted by $N_{(\ge j)},$ satisfies
$$
\sum_{i=\log_2 j}^{t} {t \choose i} \le N_{(\ge j)} \le
\sum_{i=\max\{\log_2 j-\log_2 t-O(1),0\}}^{t} {t \choose i}.
$$
Indeed, when a vertex is identified with the binary sequence with $i \ge
\log_2 k$ many $0$'s, then the degree is at least $k$. We have $t \choose i$ such sequences. On the other hand, if the binary sequence has $i \le \log_2 k - \log_2 t - O(1)$ many $0$'s, then the corresponding vertex has degree smaller than $k$.
In particular, $N_{(\ge j)} = \Theta(n_t)$ for $j \le \sqrt{n_t},$
and therefore, the degree distribution of $G_t$ does not follow a
power law. Since ${t \choose j}$ nodes have degree around $2^j,$ the
degree distribution has ``binomial-type'' behaviour. As an example of the degree distribution of a graph generated by the ILT model, see Figure~2.

\begin{figure}[h]
\centering
\epsfig{figure=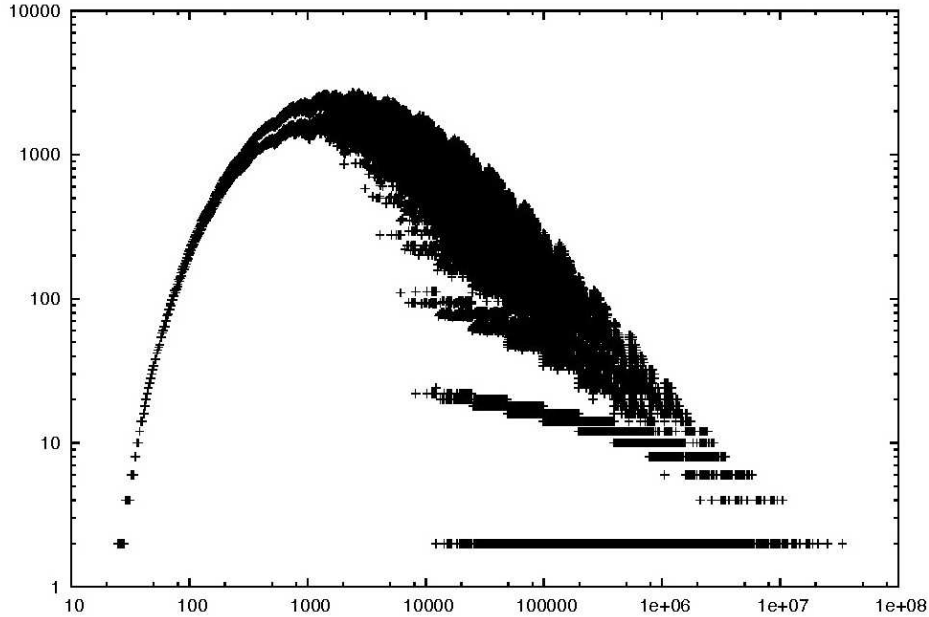}
\caption{A log-log plot of the degree distribution for $G_{25}$ with $G_0 = K_1.$}
\end{figure}

We now prove the
following lemma. Recall that $e(x,t)$ is the number of edges in
$G_{t}\upharpoonright N_t(x).$
\begin{lemma}\label{uu}
For all $x\in V(G_{t})$ with $k$ $0$'s in their binary sequence, we
have that
$$
\Omega (3^{k})=e(x,t)=O(3^{k}t^{2}).
$$
\end{lemma}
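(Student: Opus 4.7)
The plan is to track $e(x,t)$ along the ancestral path of $x$ dictated by its binary sequence, in direct parallel with the proof of Lemma~\ref{degree}. For $x \in V(G_t)$ with binary sequence $(b_1,\ldots,b_t)$ and root $x^{(0)} \in V(G_0)$, let $x^{(i)} \in V(G_i)$ denote the ancestor at time $i$ (so $x^{(i+1)} = x^{(i)}$ when $b_{i+1}=0$ and $x^{(i+1)} = (x^{(i)})'$ when $b_{i+1}=1$), and write $d_i = \deg_i(x^{(i)})$, $e_i = e(x^{(i)},i)$, and $k_i$ for the number of zeros in $(b_1,\ldots,b_i)$. The key first step is a one-step recurrence for $e_{i+1}$. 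When $b_{i+1}=0$, the set $N_{i+1}(x^{(i)})$ is the disjoint union of $N_i(x^{(i)})$, the clones $\{z' : z \in N_i(x^{(i)})\}$, and the singleton $\{(x^{(i)})'\}$; a direct count of edges within and across these three blocks (using that clones are pairwise non-adjacent and that $z'$ is joined exactly to $z$ and $N_i(z)$) yields $e_{i+1} = 3e_i + 2d_i$. When $b_{i+1}=1$, $N_{i+1}((x^{(i)})') = N_i(x^{(i)}) \cup \{x^{(i)}\}$, giving $e_{i+1} = e_i + d_i$.

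The next step is to rescale by $3^{k_i}$. Setting $f_i = e_i/3^{k_i}$, the recurrences become
\[
f_{i+1} - f_i = \begin{cases} 2d_i/3^{k_{i+1}} & \text{if } b_{i+1}=0, \\ d_i/3^{k_i} & \text{if } b_{i+1}=1, \end{cases}
\]
both nonnegative; telescoping, $e(x,t)/3^k = f_t = f_0 + \sum_{i=1}^{t}\Delta_i$ with nonnegative increments. For the upper bound I apply Lemma~\ref{degree} to get $d_{i-1} \le 2^{k_{i-1}}(d_0+i)$, so each $\Delta_i \le (2/3)^{k_{i-1}}(d_0+i) \le d_0 + i$; summing produces $f_t = O(t^2)$, whence $e(x,t) = O(3^k t^2)$ with implicit constant depending on $G_0$.

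For the lower bound, the other inequality of Lemma~\ref{degree} provides $d_{i-1} \ge 2^{k_{i-1}}(d_0+1) - 1$. Keeping only the contributions from the $k$ zeros (where the values of $k_{i-1}$ run through $0,1,\ldots,k-1$), the partial sum telescopes into a geometric series bounded below by $2(d_0+1)(1-(2/3)^k) - (1-3^{-k})$, which tends to $2(d_0+1)-1 \ge 1$ as $k \to \infty$ and yields the desired $\Omega(3^k)$ bound. The main technical subtlety is the possible degeneracy of the lower-bound constant for very small $k$ or in the edge case $G_0 = K_1$ (where $d_0=0$), since individual early increments can vanish; this is handled by noting that the bound is asymptotic, and that the geometric tail (or, in the all-ones case $k=0$, the already-established quadratic growth from the ``ones'' increments) provides a uniform positive constant once sufficiently many steps have occurred.
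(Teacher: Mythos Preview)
Your proof is correct and follows the same skeleton as the paper: you derive the identical one-step recurrences $e_{i+1}=3e_i+2d_i$ (bit $0$) and $e_{i+1}=e_i+d_i$ (bit $1$), and for the upper bound you combine these with the degree estimate of Lemma~\ref{degree} exactly as the paper does, your rescaling $f_i=e_i/3^{k_i}$ being a clean way to organize the same computation.

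The one real difference is the lower bound. You keep track of the full increments and sum the resulting geometric series, which forces you to confront the ``technical subtlety'' at small $k$ and at $G_0=K_1$. The paper sidesteps all of this: from the recurrences it simply uses $e_{i+1}\ge 3e_i$ at a $0$-step and $e_{i+1}\ge e_i$ at a $1$-step, together with the observation that $e(\,\cdot\,,2)>0$ for every initial graph, to conclude $e(x,t)\ge 3^{k-2}\min_{y\in V(G_2)}e(y,2)=\Omega(3^k)$ in one line. Your argument is valid, but the paper's monotonicity shortcut is the cleaner route and eliminates the edge-case discussion entirely.
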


We note that the constants hidden in $\Omega(\cdot)$ and $O(\cdot)$ notations (both in the statement of the lemma and in the proof below) do not depend on $k$ nor $t.$
\bigskip

\noindent \emph{Proof of Lemma~\ref{uu}.}
For $x\in V(G_{t})$ we have that
\begin{eqnarray*}
e(x,t+1) &=&e(x,t)+\deg _{t}(x)+\sum_{i=1}^{\deg _{t}(x)}(1+\deg
_{G_{t}\upharpoonright N_t(x)}(x)) \\
&=&3e(x,t)+2\deg _{t}(x).
\end{eqnarray*}
For $x^{\prime },$ we have that
$$
e(x^{\prime },t+1)=e(x,t)+\deg _{t}(x).
$$
Since there are $k$ many $0$'s and $e(x,2)$ is always positive for
all initial graphs $G_0$, $e(x,t) \ge 3^{k-2}e(x,2) = \Omega(3^k)$
and the lower bound follows.

For the upper bound, a general binary sequence corresponding to $x$
is of the form
$$
(1,\ldots ,1,0,1,\ldots ,1,0,1,...,1,0,1,\ldots ,1,0,1,\ldots ,1)
$$
with the $0$'s in positions $i_{k}$ ($1\leq i\leq k$). Consider a
path in the descendant tree from the root of the tree to node $x$.
By Lemma~\ref{degree},  the node on the path in the $i$th row ($i <
i_{j}$) has (at time $i$) degree $O(2^{j-1}t)$.

Hence, the number of edges we estimate is $O(t^2)$ until the
$(i_1-1)$th row, increases to $3 O(t^2) + O(2^1 t)$ in the next row,
and increases to $3 O(t^2) + O(2^1 t^2)$ in the $(i_2-1)$th row. By
induction, we have that
\begin{eqnarray*}
e(x,t) &=& 3(\dots(3(3 O(t^2)+O(2^1 t^2))+O(2^2 t^2)) \dots)+O(2^k t^2) \\
&=&O(t^{2})3^{k}\sum_{i=0}^{k}\left( \frac{2}{3}\right) ^{j} \\
&=&O(3^{k}t^{2}).\qedhere \qed
\end{eqnarray*}

We now prove our result on clustering coefficients.

\noindent \emph{Proof of Theorem~\ref{cluster}.} For $x\in V(G_{t})$
with $k$ many $0$'s in its binary sequence, by Lemmas~\ref{degree}
and \ref{uu} we have that
$$
c_t(x)=\Omega \left( \frac{3^{k}}{\left( 2^{k}t\right) ^{2}}\right)
=\Omega \left( \left( \frac{3}{4}\right) ^{k}t^{-2}\right) ,
$$
and
$$
c_t(x) = O\left( \frac{3^{k}t^{2}}{\left( 2^{k}\right) ^{2}}\right)
=O\left( \left( \frac{3}{4}\right) ^{k}t^{2}\right) .
$$

Hence, since we have $n_0 {t \choose k}$ nodes with $k$ many $0$'s
in its binary sequence,
$$
C(G_{t})=\frac{\sum_{k=0}^{t} n_0 {{t}\choose{k}} \Omega \left(
\left( \frac{3}{4} \right) ^{k}t^{-2}\right) }{n_0 2^{t}}=\Omega
\left( \frac{t^{-2}\left( 1+\frac{3 }{4}\right) ^{t}}{2^{t}}\right)
=\Omega \left( \left( \frac{7}{8}\right) ^{t}t^{-2}\right) .
$$
In a similar fashion, it follows that
$$
C(G_{t})=\frac{\sum_{k=0}^{t} n_0 {{t}\choose{k}} O\left( \left(
\frac{3}{4}\right) ^{k}t^{2}\right) }{n_0 2^{t}}=O\left( \left(
\frac{7}{8}\right) ^{t}t^{2}\right) . \qed \qedhere
$$

\subsection{Proofs of Theorems~\ref{expa}, \ref{thm:decgap}, \ref{adj1}, and \ref{adj}}\label{spec}

We present proofs of the spectral properties of the ILT model. For ease of notation, let $\lambda(t) = \lambda(G_t).$ \vspace{.1in}

\noindent \textbf{Proof of Theorem~\ref{expa}. }
We use the expander mixing lemma for the normalized
Laplacian (see \cite{sgt}). For sets of nodes $X$ and $Y$ we use the notation
$\vol(X)$ for the volume of the subgraph induced by $X,$ and $e(X,Y)$ for the number of edges with one end in
each of $X$ and $Y.$
\begin{lemma}\label{mix}
For all sets $X \subseteq G,$
\[
\left| e(X,X) - \frac{(\vol(X))^{2}}{\vol(G)} \right| \leq \lambda
\frac{\vol(X)\vol(\bar{X})}{\vol(G)}.
\]
\end{lemma}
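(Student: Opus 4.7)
The plan is to prove Lemma~\ref{mix} by spectrally decomposing the operator $I - \mathcal{L} = D^{-1/2}AD^{-1/2}$. First I would fix an orthonormal eigenbasis $\phi_0, \phi_1, \ldots, \phi_{n-1}$ of $\mathcal{L}$ with eigenvalues $0 = \lambda_0 \leq \lambda_1 \leq \cdots \leq \lambda_{n-1}$; these same vectors diagonalize $I - \mathcal{L}$ with eigenvalues $1 - \lambda_i$. One may take $\phi_0 = D^{1/2}\mathbf{1}/\sqrt{\vol(G)}$, since $D^{-1/2}AD^{-1/2}$ fixes $D^{1/2}\mathbf{1}$.

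Next, I would encode $e(X,X)$ as a quadratic form: letting $\chi_X$ be the $\{0,1\}$-indicator vector of $X$, one has $e(X,X) = \chi_X^T A \chi_X$, and setting $\psi_X = D^{1/2}\chi_X$ this rewrites as $\psi_X^T (I - \mathcal{L}) \psi_X$. Expanding $\psi_X = \sum_i a_i \phi_i$ with $a_i = \langle \psi_X, \phi_i \rangle$, a direct computation gives
\[
a_0 = \frac{\langle D^{1/2}\chi_X, D^{1/2}\mathbf{1}\rangle}{\sqrt{\vol(G)}} = \frac{\vol(X)}{\sqrt{\vol(G)}},
\]
so the $i = 0$ contribution to $\psi_X^T (I - \mathcal{L}) \psi_X$ is exactly $(1 - \lambda_0) a_0^2 = \vol(X)^2 / \vol(G)$, which is the ``main term'' on the left-hand side of the claimed inequality.

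Subtracting this main term yields $e(X,X) - \vol(X)^2/\vol(G) = \sum_{i \geq 1}(1 - \lambda_i) a_i^2$, and by the triangle inequality together with the definition of $\lambda$, the absolute value of this remainder is at most $\lambda \sum_{i \geq 1} a_i^2$. The final ingredient is Parseval: $\sum_i a_i^2 = \|\psi_X\|^2 = \chi_X^T D \chi_X = \vol(X)$, so
\[
\sum_{i \geq 1} a_i^2 = \vol(X) - \frac{\vol(X)^2}{\vol(G)} = \frac{\vol(X)\vol(\bar X)}{\vol(G)},
\]
which plugged in gives precisely the bound in the lemma. The argument is essentially mechanical; the only points needing care are confirming that $D^{1/2}\mathbf{1}$ is the desired top eigenvector of $I - \mathcal{L}$ (equivalently, the bottom eigenvector of $\mathcal{L}$) and carrying the Parseval computation through cleanly.
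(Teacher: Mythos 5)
The paper does not prove Lemma~\ref{mix} itself; it cites Chung's \emph{Spectral Graph Theory} for the normalized-Laplacian form of the expander mixing lemma. Your proof is exactly the standard argument from that reference: diagonalize $D^{-1/2}AD^{-1/2}$, peel off the $\phi_0 = D^{1/2}\mathbf{1}/\sqrt{\vol(G)}$ component of $D^{1/2}\chi_X$ to produce the main term $\vol(X)^2/\vol(G)$, bound the rest by $\lambda$ times the residual $\ell^2$ mass, and evaluate that mass via Parseval as $\vol(X)\vol(\bar X)/\vol(G)$. All steps check out, including the implicit convention that $e(X,X) = \chi_X^T A\chi_X$ counts ordered adjacent pairs (edges inside $X$ twice), which is the only convention under which the stated identity is even consistent when $X = V(G)$; so this is correct and is essentially the proof the paper is pointing to.
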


We observe that
$G_{t}$ contains an independent set (that is, a set of nodes with no edges) with volume $\vol(G_{t-1}) +
n_{t-1}$. Let $X$ denote this set, that is, the new nodes added at
time $t$. Then by (\ref{recc}) it follows that $$\vol(\bar{X}) = \vol(G_t)-\vol(X)=2\vol(G_{t-1}) +
n_{t-1}.$$  Since $X$ is independent, Lemma~\ref{mix} implies that
\[
\lambda(t) \geq \frac{\vol(X)}{\vol(\bar{X})} = \frac{\vol(G_{t-1})
+ n_{t-1}}{2\vol(G_{t-1}) + n_{t-1}} > \frac{1}{2}. \qed \qedhere
\]

\noindent \textbf{Proof of Theorem~\ref{thm:decgap}.} Before we proceed with the proof of Theorem $\ref{thm:decgap}$, we
begin by stating some notation and a lemma. For a given node $u \in
V(G_{t})$, we let $\tilde{u} \in V(G_{0})$ denote the node in $G_0$
that $u$ is a descendant of.  Given $uv \in E(G_0)$, we define
\[
\mathcal{A}_{uv}(t) = \{ xy \in E(G_{t}): \tilde{x} = u, \tilde{y} =
v\},
\]
and for $v \in E(G_0)$, we set
\[
\mathcal{A}_v(t) = \{ xy \in E(G_{t}): \tilde{x}=\tilde{y} = v\}.
\]

We use the following lemma, for which the proof of items (1) and (2)
follow from Lemma~\ref{lemm}. The final item contains a standard
form of the Raleigh quotient characterization of the second
eigenvalue; see \cite{sgt}.
\begin{lemma}\label{lemmm}
$ $ \\
\begin{enumerate}
\item For $uv \in E(G_0)$,
\[
|\mathcal{A}_{uv}(t)| = 3^{t}.
\]
\item For $v \in V(G_0),$
\[
|\mathcal{A}_{v}(t)| = 3^{t} - 2^{t}.
\]
\item
Define
\begin{eqnarray*}
\bar{d} &=& \frac{ \sum_{v\in V(G_t)} f(v)\deg_t(v)}{\vol(G_t)}.
\end{eqnarray*}
Then
\begin{equation}
\lambda_1(t) = \inf_{\substack{f: V(G_t) \to \mathbb{R},\\ f \neq
0}} \frac{\sum\limits_{uv\in E(G_t)} (f(u) -
f(v))^{2}}{\sum\limits_{v} f^{2}(v) \deg_t(v) - \bar{d}^{2}
\vol(G_t)}. \label{eqn:ral3}
\end{equation}
\end{enumerate}
\end{lemma}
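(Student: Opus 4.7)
The plan is to handle the three items in order, since items (1) and (2) are combinatorial bookkeeping that reduces to Lemma~\ref{lemm}, while item (3) is the standard Rayleigh quotient manipulation.

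For item (1), I would argue by induction on $t$ using the local structure of the cloning step. The base case $t=0$ gives $|\mathcal{A}_{uv}(0)|=1$ since $uv$ is the unique edge with the prescribed ancestry. For the induction step, each edge $xy\in\mathcal{A}_{uv}(t)$ with $\tilde{x}=u$ and $\tilde{y}=v$ produces exactly three edges at time $t+1$: the surviving edge $xy$, the edge $x'y$ (because $x'$ is joined to every neighbour of $x$, including $y$), and the edge $xy'$ (symmetrically). The pair $x'y'$ is not an edge because the new nodes form an independent set. Conversely, every edge in $\mathcal{A}_{uv}(t+1)$ arises this way: if both endpoints lie in $V(G_t)$, the edge already belongs to $\mathcal{A}_{uv}(t)$; if exactly one endpoint is new, say $a=x'$, then its partner $b\in V(G_t)$ must satisfy $xb\in E(G_t)$, giving an edge in $\mathcal{A}_{uv}(t)$; and no edge has two new endpoints. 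This yields $|\mathcal{A}_{uv}(t+1)|=3|\mathcal{A}_{uv}(t)|$ and hence $|\mathcal{A}_{uv}(t)|=3^t$.

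For item (2), I would exploit the self-similarity of the construction: the subgraph of $G_t$ induced by the descendants of a fixed node $v\in V(G_0)$ is exactly the graph produced by running the ILT model for $t$ steps starting from the single node $v$, that is, starting from $G_0=K_1$. This is immediate from the definition, since at every step each descendant of $v$ is cloned, and the new clones are joined only inside this descendant set. Applying Lemma~\ref{lemm} with $n_0=1$ and $e_0=0$ to this initial graph gives $e_t=3^t(0+1)-2^t=3^t-2^t$, which is precisely $|\mathcal{A}_v(t)|$.

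For item (3), I would reproduce the standard passage from the constrained Courant--Fischer formula to the quotient stated in the lemma. Starting from
\begin{equation*}
\lambda_1(t)=\inf_{g\perp D^{1/2}\mathbf{1},\ g\neq 0}\frac{g^{T}\mathcal{L}g}{g^{T}g},
\end{equation*}
one substitutes $g=D^{1/2}f$ to obtain the Dirichlet form in the numerator, $\sum_{uv\in E(G_t)}(f(u)-f(v))^2$, and the weighted norm $\sum_v f^2(v)\deg_t(v)$ in the denominator, while the orthogonality to $D^{1/2}\mathbf{1}$ becomes $\sum_v f(v)\deg_t(v)=0$. To drop that constraint, one replaces $f$ by $f-\bar{d}$; the numerator is unaffected (it only sees differences of $f$-values), and a direct expansion of $\sum_v(f(v)-\bar{d})^2\deg_t(v)$ yields $\sum_v f^2(v)\deg_t(v)-\bar{d}^{2}\vol(G_t)$. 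The infimum can then be taken over all non-constant $f$, which is equivalent to the denominator being nonzero.

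The main obstacle, such as it is, lies in item~(1): one must be careful to verify both that the three listed edges are distinct and that no edge of $\mathcal{A}_{uv}(t+1)$ is missed, particularly the exclusion of $x'y'$, which is where the independence of the newly added nodes enters crucially. Items~(2) and~(3) are routine once one identifies the right initial graph and performs the standard Rayleigh quotient substitution.
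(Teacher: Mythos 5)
Your proposal is correct and fills in the details the paper leaves implicit. The paper's own justification is essentially one line: items (1) and (2) are said to ``follow from Lemma~\ref{lemm},'' and item (3) is referred to Chung's book. Your item (2) is exactly the route the paper has in mind: the set of descendants of a fixed $v\in V(G_0)$ is closed under cloning, so the induced subgraph on it is a copy of the ILT process started from $K_1$, and Lemma~\ref{lemm} with $n_0=1$, $e_0=0$ gives $3^t-2^t$. For item (1) you instead run a direct induction on $t$, tracking which of $xy$, $x'y$, $xy'$, $x'y'$ survive the cloning step; the paper's remark points to a slightly different reduction (apply Lemma~\ref{lemm} to the descendants of the pair $\{u,v\}$, i.e.\ the ILT process started from $K_2$, and subtract the two within-class counts from item (2)), but your direct $3$-to-$1$ count is cleaner, equally valid, and correctly invokes independence of the clones to rule out $x'y'$. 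Item (3) is the standard passage from the Courant--Fischer form $\inf_{g\perp D^{1/2}\mathbf{1}}\, g^{T}\mathcal{L}g/g^{T}g$ via $g=D^{1/2}f$ and the centering $f\mapsto f-\bar d$; your expansion of the denominator to $\sum_v f^2(v)\deg_t(v)-\bar d^{2}\vol(G_t)$ is exactly right. No gaps.
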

Note that in item (3), $\bar{d}$ is a function of $f.$
Now let $g:V(G_0)
\to \mathbb{R}$ be the harmonic eigenvector for $\lambda_1(0)$ so
that $$\sum_{v\in V(G_0)} g(v) \deg_0(v) = 0,$$ and
\[
\lambda_1(0) = \frac{\sum\limits_{uv \in E(G_0)} (g(u) -
g(v))^{2}}{\sum\limits_{v \in V(G_0)} g^2(v)\deg_0(v)}.
\]
Furthermore, we choose $g$ scaled so that $\sum_{v \in V(G_0)}
g^{2}(v) \deg_0(v) = 1$.  This is the standard version of the
Raleigh quotient for the normalized Laplacian from \cite{sgt}, so
such a $g$ exists so long as $G_0$ has at least two eigenvalues,
which it does by our assumption that $G_0 \ncong K_1$. Our strategy
in proving the theorem is to show that lifting $g$ to
$G_1$ provides an effective bound on the second eigenvalue of
$G_{1}$ using the form of the Raleigh quotient given in
(\ref{eqn:ral3}).

Define $f:G_{t} \to \mathbb{R}$ by $ f(x) = g(\tilde{x}).$ Then note
that
\begin{eqnarray*}
\sum_{xy \in E(G_t)} (f(x) - f(y))^{2} &=& \sum_{\substack{xy \in
E(G_t), \\ \tilde{x} = \tilde{y}}} (f(x) - f(y))^{2} +
\sum_{\substack{xy \in E(G_t)\\ \tilde{x} \neq \tilde{y}}} (f(x) - f(y))^{2} \\
&=& \sum_{uv \in E(G_0)} \sum_{ xy \in \mathcal{A}_{uv }} (g(u) - g(v))^{2} \\
&=& 3^{t}\sum_{uv \in E(G_0)} (g(u)-g(v))^2.
\end{eqnarray*}
By Lemma~\ref{lemmm} (1) and (2) it follows that
\begin{eqnarray*}
\sum_{x \in V(G_t)} f^{2}(x) \deg_t(x) &=&
\sum_{x \in V(G_{t})} \sum_{xy \in E(G_t)} f^{2}(x) \\ &=& \sum_{u \in V(G_{0})} \sum_{\substack{ x y \in E(G_t), \\ \tilde{x} = u}} g^{2}(u)\\
&=& \sum_{u \in V(G_{0})} g^{2}(u)\left( \sum_{vu \in E(G_0)} \sum_{xy \in \mathcal{A}_{ uv }} 1 + 2|\mathcal{A}_{u}| \right) \\
&=& 3^{t} \sum_{u \in V(G_{0})} g^{2}(u)\deg_0(u) + 2(3^{t} - 2^{t})
\sum_{u \in V(G_0)} g^{2}(u)
\\ &=& 3^{t} + 2(3^t - 2^t) \sum_{u \in G_0} g^2(u).
\end{eqnarray*}

By Lemma~$\ref{lemm}$ and proceeding as above, noting that $\sum_{v
\in V(G_0)} g(v) \deg_0(v) = 0$, we have that
\begin{eqnarray*}
\bar{d}^2\vol(G_t) &=& \frac{\left( \sum\limits_{x \in V(G_t)} f(x) \deg_t(x) \right)^{2}}{\vol(G_t)} \\
&=& \frac{ \left(2 (3^{t} - 2^{t}) \sum\limits_{u \in V(G_0)} g(u)\right)^{2} }{\vol(G_t)} \\
&=& \frac{ 4\cdot 3^{2t}\left(1 -
\left(\frac{2}{3}\right)^{t}\right)^{2} \left( \sum\limits_{u \in
V(G_0)} g(u) \right)^{2} }{
3^{t}\left(\vol(G_0) + 2n_0 \left(1 - \left(\frac{2}{3}\right)^{t}\right)\right)} \\
&\leq&  \frac{ 4\cdot 3^{t}\left(1 -
\left(\frac{2}{3}\right)^t\right)^{2}  \sum\limits_{u \in V(G_0)}
g^2(u) }{\bar{D} + 2\left(1 - \left(\frac{2}{3}\right)^t\right)},
\end{eqnarray*}
where $\bar{D}$ is the average degree of $G_0$, and the last
inequality follows from the Cauchy-Schwarz inequality.

By $(\ref{eqn:ral3})$ we have that
\begin{eqnarray*}
\lambda_1(t) &\leq& \frac{\sum\limits_{x y\in E(G_t)} (f(x) - f(y))^{2}}{\sum\limits_{x\in V(G_t)} f^{2}(x) \deg_t(x) + \bar{d}^{2} \vol(G_t)} \\
&\leq& \frac{ 3^{t}\sum\limits_{uv \in E(G_0)} (g(u)-g(v))^2}{3^{t}
+ 2 \cdot 3^t\left(1  - \left(\frac{2}{3}\right)^t\right)\left(
\sum_{u \in V(G_0)} g^2(u)\right)  -
\frac{ 4\cdot 3^{t}\left(1 - \left(\frac{2}{3}\right)^t\right)^{2}  \sum\limits_{u \in V(G_0)} g^2(u) } {\bar{D} + 2\left(1 - \left(\frac{2}{3}\right)^t\right)} }\\
& = & \frac{\lambda_1(0)}{1 + 2\left(1 -
\left(\frac{2}{3}\right)^{t}\right) \left(\sum\limits_{u\in V(G_0)}
g^2(u) \right)\left(1 - \frac{2\left(1 -
\left(\frac{2}{3}\right)^{t}\right)}{\bar{D} +
2\left(1 - \left(\frac{2}{3}\right)^{t}\right) }\right)}\\
& < & \lambda_1(0),
\end{eqnarray*}
where the strict inequality follows from the fact that $\bar{D} \geq
1$ since $G_0$ is connected and $G_0 \ncong K_1$. \qed

\vspace{.1in}
\noindent \textbf{Proof of Theorem~\ref{adj1}.}
We denote vectors in \textbf{bold}. We first assume that $\rho \neq -1$. Hence, $\rho_{+}, \rho_{-} \neq 0 .$
Let $\mathbf{u}$ be an eigenvector of $A=A(G_t)$ such that $A \mathbf{u} =\rho \mathbf{u}.$ Let $\beta=\frac{(\rho + 1)}{\rho},$ and let
$$ {\mathbf v}=
 \left(
\begin{array}{c}
{\mathbf u} \\
\beta{\mathbf u}
\end{array}
 \right).$$
Then we have that
\begin{eqnarray*}
M \mathbf{v} &=& \left(
\begin{array}{cc}
A & A+I \\
A+I & 0%
\end{array}
\right)
\left(
\begin{array}{c}
\mathbf{u} \\
\beta \mathbf{u} %
\end{array}
\right) \\
&=& \left(
\begin{array}{c}
\rho \mathbf{u} + (\rho + 1)\beta \mathbf{u} \\
(\rho + 1)\mathbf {u}
\end{array} \right). \\
\end{eqnarray*}
Now $\beta \rho =\rho + 1$, and so $(\rho+1)\mathbf{u}=\beta \rho \mathbf{u}.$ The condition
$$\rho=\rho +\beta(\rho +1)= \rho +\frac{(\rho + 1)^{2}}{\rho}$$
is equivalent to $\rho$ solving
$$x-\rho-\frac{(\rho +1)^{2}}{x}=0.$$
Hence, $M \mathbf{v} =\rho \mathbf{v}$ as desired.

Now let $\rho=-1$. In this case, $\rho_{-}=-1$. Let
$$ \mathbf{v}= \left(
\begin{array}{c}
\mathbf{u} \\
\mathbf{0}
\end{array} \right),$$
where $\mathbf{0}$ is the appropriately sized zero vector.
Thus,
\begin{eqnarray*}
M \mathbf{v} &=& \left(
\begin{array}{cc}
A & A+I \\
A+I & 0%
\end{array}
\right)
\left(
\begin{array}{c}
\mathbf{u} \\
\mathbf{0} %
\end{array}
\right) \\
&=& \left(
\begin{array}{c}
-\mathbf{u} \\
\mathbf{0} \\
\end{array} \right). \\
\end{eqnarray*}
Hence, $M \mathbf{v}=\rho_{-} \mathbf{v}$ as desired.
In this case where $\rho_{+}=0$ and $\rho=-1$, let
$$ \mathbf{v}= \left(
\begin{array}{c}
\mathbf{0} \\
\mathbf{u}
\end{array} \right),$$
and so $M \mathbf{v}=\rho_{+} \mathbf{v}.$ \qed

\vspace{.1in}
\noindent \textbf{Proof of Theorem~\ref{adj}.} Without loss of
generality, we assume that $G_0$ is not the trivial graph $K_1;$ otherwise, $G_1$ is
$K_2,$ and we may start from there. Thus, in particular, we can
assume $\rho_0(0) \geq 1$.

We first observe that by Theorem~\ref{adj1}
\begin{equation*}
\rho_{0}(t) \geq \left(\frac{1 + \sqrt{5}}{2}\right)^t \rho_0(0).
\label{eqn:triv}
\end{equation*}
By Theorem~\ref{adj1} and
by taking a branch of descendants from the largest eigenvalue it follows that
\[
|\rho_1(t)| \geq \frac{2(\sqrt{5}-1)}{(1 + \sqrt{5})^2}
\left(\frac{1 + \sqrt{5}}{2} \right)^{t} \rho_0(0).
\]
Hence, to prove the theorem, it suffices to show that
\[
\rho_0(t) \leq c\left(\frac{1 + \sqrt{5}}{2}\right)^{t} \rho_0(0).
\]
Observe that, also by Theorem~\ref{adj1} and taking the largest
branch of descendants from the largest eigenvalues,
\[
\rho_0(t) = \rho_0(0)\prod_{i=0}^{t-1} \left(\frac{1 + \sqrt{5 +
\frac{8}{\rho_0(i)} + \frac{4}{\rho^2_0(i)}}}{2}\right) \leq
\rho_0(0) \prod_{i=0}^{t-1} \left(\frac{1 + \sqrt{5 +
\frac{6}{\rho_0(i)}}}{2}\right).
\]
Thus,
\begin{eqnarray*}
\frac{2^t\rho_0(t)}{(1 + \sqrt{5})^{t}} &\leq&
\rho_0(0) \prod_{i=0}^{t-1} \frac{1 + \sqrt{5 + \frac{6}{\rho_0(i)}}}{1 + \sqrt{5}} \\
&\leq& \rho_0(0)\prod_{i=0}^{t-1} \left(1 + \frac{\sqrt{5}}{1 + \sqrt{5}} \frac{6}{5\rho_0(i)} \right) \\
&\leq& \rho_0(0)\exp\left( \frac{6\sqrt{5}}{5(1 + \sqrt{5})} \sum_{i=0}^{t-1} \rho_0(i)^{-1} \right)\\
&\leq& \rho_0(0) \exp\left( \frac{6\sqrt{5}}{5(1 +
\sqrt{5})\rho_0(0)} \sum_{i=0}^{\infty} \left(\frac{2}{1 +
\sqrt{5}}\right)^{-i} \right) = \rho_0(0)c.
\end{eqnarray*}
In all we have proved that for constants $c$ and $d$ that
$$
c \left(\frac{1 + \sqrt{5}}{2}\right)^{t} \rho_0(0) \geq \rho_0(t)
\geq |\rho_1(t)| \geq d\left(\frac{1+\sqrt{5}}{2}\right)^{t}
\rho_0(t).\qed \qedhere
$$

\subsection{Proofs of Theorems~\ref{dom} and \ref{embed}}

We give the proofs for the results on the cop number, domination number, and automorphism group of the ILT model. \vspace{.1in}

\noindent \textbf{Proof of Theorem~\ref{dom}.} We prove that for $t \ge 0$, $\gamma(G_{t+1})= \gamma(G_t)$. It then follows that $\gamma(G_t)=\gamma(G_0)$.
When a dominating node $x \in V(G_t)$ is cloned, its clone $x'$ will be dominated by $x$. The clone $y'$ of a non-dominating node $y \in V(G_t)$ will be joined to a dominating node since $y$ is joined to one. Hence, a dominating set in $G_t$ is a dominating set in $G_{t+1},$ and so $\gamma(G_{t+1})\le \gamma(G_t).$ If $S'$ is a dominating set in $G_{t+1},$ then form $S$ by replacing (if necessary) nodes $x' \in S'$ by nodes $x.$ As $S$ dominates $G_t,$ it follows that $\gamma(G_{t})\le \gamma(G_{t+1}).$

We next show that $c(G_{t+1})=c(G_t)$.
Let $c=c(G_t)$. Assume that $c$ cops play in $G_{t+1}$ so that whenever $\mathcal{R}$ is on $x' \in V(G_{t+1}) \setminus V(G_t)$, the cops $\mathcal{C}$ play as if he were on $x \in V(G_t)$. Either $\mathcal{C}$ captures $\mathcal{R}$ on $x'$, or using their winning strategy in $G_t$, the cops move to $x$ with $\mathcal{R}$ on $x'$. The cops then win in the next round. Hence, $$c(G_{t+1}) \le c(G_t).$$
If $b=c(G_{t+1}) < c,$ then we prove that $c(G_t) \le b,$ which is a contradiction. Suppose that $\mathcal{R}$ and $\mathcal{C}$ play in $G_t.$ At the same time this game is played, let the set of $b$ cops $\mathcal{C}'$ play with their winning strategy in $G_{t+1},$ under the assumption that $\mathcal{R}$ remains in $G_t.$  Each time a cop in $\mathcal{C}'$ moves to a cloned node $x'$, move the corresponding cop in $\mathcal{C}$ to $x.$ As $x$ and $x'$ are joined and share the exact same neighbours in $G_{t+1},$ $\mathcal{C}$ may win in $G_t$ with $b< c$ cops. \qed

\vspace{.1in}

\noindent \textbf{Proof of Theorem~\ref{embed}.}  We first prove the following lemma.

\begin{lemma}\label{ll}
Each $f_0 \in \mathrm{Aut}(G_0)$, extends to $f_t \in \mathrm{Aut}(G_t)$.
\end{lemma}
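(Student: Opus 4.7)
The plan is to prove Lemma~\ref{ll} by induction on $t$, where the inductive step exploits the clean rule that a clone $x'$ is adjacent to precisely $x$ and the neighbours of $x$ at time $t$. The base case $t=0$ is immediate by taking the extension to be $f_0$ itself. For the inductive step, suppose $f_0$ has been extended to $f_t \in \mathrm{Aut}(G_t)$; I will define an extension $f_{t+1}$ explicitly by
\[
f_{t+1}(v) = \begin{cases} f_t(v) & \text{if } v \in V(G_t), \\ (f_t(x))' & \text{if } v = x' \text{ is the clone of } x \in V(G_t). \end{cases}
\]
Since $f_t$ is a bijection on $V(G_t)$ and the clone map $x \mapsto x'$ is a bijection from $V(G_t)$ onto $V(G_{t+1}) \setminus V(G_t)$, the map $f_{t+1}$ is a bijection on $V(G_{t+1})$. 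By construction, $f_{t+1} \upharpoonright V(G_t) = f_t$, and so by the inductive hypothesis $f_{t+1} \upharpoonright V(G_0) = f_0$.

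The main step is to verify that $f_{t+1}$ preserves the edge relation of $G_{t+1}$. I would split into three cases according to how many endpoints of an edge lie in $V(G_t)$. First, if $x, y \in V(G_t)$, then $xy \in E(G_{t+1})$ if and only if $xy \in E(G_t)$ (since no old edges are added or removed), and this holds iff $f_t(x)f_t(y) \in E(G_t)$ by the inductive hypothesis. Second, for an old node $x$ and a clone $y'$, the defining rule of the ILT model gives $xy' \in E(G_{t+1})$ iff $x = y$ or $xy \in E(G_t)$; after applying $f_{t+1}$ this becomes $f_t(x) = f_t(y)$ or $f_t(x)f_t(y) \in E(G_t)$, which by the inductive hypothesis is equivalent to $f_{t+1}(x)\, f_{t+1}(y') = f_t(x)\,(f_t(y))' \in E(G_{t+1})$. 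Third, for two clones $x', y'$, both sides are non-edges since the new nodes form an independent set in $G_{t+1}$, and $(f_t(x))'$ and $(f_t(y))'$ are likewise clones and hence non-adjacent.

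Since each of the three cases yields an adjacency equivalence, $f_{t+1} \in \mathrm{Aut}(G_{t+1})$, completing the induction. The only subtlety — and thus the sole nontrivial verification — is the mixed case of an old node and a clone, where one has to use both that cloning copies the closed neighbourhood and that $f_t$ is an automorphism (so equality of images reflects equality of preimages). Everything else is a bookkeeping consequence of the explicit formula for $f_{t+1}$, and Theorem~\ref{embed} will follow by observing that the assignment $f_0 \mapsto f_t$ given by this construction is injective (since restriction to $V(G_0)$ recovers $f_0$) and respects composition (since the recursive definition commutes with composition at each level).
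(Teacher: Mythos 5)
Your proof is correct and follows essentially the same route as the paper: define $f_{t+1}$ on $V(G_{t+1})$ by acting as $f_t$ on old nodes and sending the clone $x'$ to $(f_t(x))'$, check bijectivity, and verify adjacency preservation in the three cases (old--old, old--clone, clone--clone). You actually spell out the old--clone case more carefully than the paper does, explicitly using $xy' \in E(G_{t+1}) \Leftrightarrow (x=y \text{ or } xy \in E(G_t))$, which is the one place the paper's write-up is terse.
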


\noindent \textbf{Proof.} Given $f_0 \in \mathrm{Aut}(G_0)$, we prove by induction on $t \ge 0$ that $f_0$ extends to $f_t \in \mathrm{Aut}(G_t)$. The base case is immediate. Assuming that $f_t$ is defined, let
\begin{equation*}
f_{t+1}(x)= \begin{cases}
f_{t}(x) & \textrm{if $x \in V(G_t)$},\\
(f_{t}(y))' & \textrm{where $x=y'$ }.\\
\end{cases}
\end{equation*}
Let $x,y$ be distinct nodes of $V(G_t)$. It is straightforward to see that $f_{t+1}$ is a bijection. We show that $xy \in E(G_{t+1})$ if and only if $f_{t+1}(x)f_{t+1}(y) \in E(G_{t+1})$. This will prove that $f_{t+1} \in \mathrm{Aut}(G_t)$, as $f_{t+1}$ extends $f_t$.

The case for $x,y \in V(G_t)$ is immediate as $f_{t} \in \mathrm{Aut}(G_t)$. Next, we consider the case for $x \in V(G_t)$ and $y' \in V(G_{t+1})$. Now $xy' \in E(G_{t+1})$ if and only if $$f_{t+1}(x)f_{t+1}(y')=f_{t}(x)(f_{t}(y))' \in E(G_{t+1}).$$
Note that $x'y' \notin E(G_{t+1})$ for all $x',y' \in V(G_{t+1}) \setminus V(G_t)$. But $f_{t+1}(x') f_{t+1}(y') \notin E(G_{t+1})$ by definition of $G_{t+1}.$ \qed
\vspace{.1in}

We now prove that for all $t \ge 0$, $\mathrm{Aut}(G_t)$ is isomorphic to a subgroup of $\mathrm{Aut}(G_{t+1})$. The proof of Theorem~\ref{embed} then follows from this fact by induction on $t$. Define $$ \phi :\mathrm{Aut}(G_t) \rightarrow \mathrm{Aut}(G_{t+1}) $$ by
\begin{equation*}
\phi (f) (x) = \left\{ \begin{array}{ll}
f(x) & $ if $ x \in V(G_t), \\
(f(y))' & $ if $ x=y' \in V(G_{t+1}) \setminus V(G_t).  \end{array}\right.
\end{equation*}
Note that $\phi(f)(x)$ is injective, since $f \neq g$ implies that $ \phi(f) \neq \phi (g)$ by the definition of $\phi$.

We prove that for all $x \in V(G_{t+1})$ and $f,g \in \mathrm{Aut}(G_t)$, $$\phi (fg)(x)=\phi(f) \phi(g) (x).$$
If $x \in V(G_t),$ then $$\phi(fg)(x)=fg(x)=\phi(f) \phi(g) (x).$$ If $x \notin V(G_t)$, then say $x=y'$, with $y \in V(G_t)$.
We then have that
\begin{eqnarray*}
 \phi(fg) (x)
&=& (fg(y))' \\
&=& (\phi(f) \phi(g) (y) )' \\
&=& \phi (f) (g(y))' \\
&=& \phi (f) \phi(g) (x) . \qedhere \qed
\end{eqnarray*}

\subsection{Proofs of Theorems~\ref{random_dens} and \ref{random_gap}}

We give the proofs for the results on the randomized ILT model, ILT($p)$. Without loss of generality, we assume that $0<p<1$. \vspace{.1in}

\noindent \textbf{Proof of Theorem~\ref{random_dens}.} By the definition of the ILT($p$) model, we obtain the following conditional expectation:
\begin{equation*}
\E(\vol(H_{t+1}) ~|~ \vol(H_{t})) = 3 \vol(H_{t}) + n_{t+1} + n_t (n_t - 1) p(n_t).
\end{equation*}

At the beginning of the process, we cannot control the random variable $\vol(H_t)$; it may be far from its expectation. However, if $t$ is large enough, a number of additional edges added in a random process may be controlled, and $\vol(H_t)$ eventually approaches its expected value. Let
\begin{equation}\label{eq:t0}
t_0(T) = \frac {4 \log \log T}{\log(3+\delta)}
\end{equation}
be the time from which we can control the process (note that $t_0(T)$ tends to infinity with $T$). Now suppose that $$\vol(H_{t_0}) = (3+\delta)^{t_0} (1+A(t_0)).$$ The function $A(t_0)$ measures how far $\vol(G_{t_0})$ is from its expectation; we do not give a explicit formula for this, but the bounds $-1 \le A(t_0) \le \left( \frac{4}{3+\delta} \right)^{t_0}$ apply (deterministically; note that $-1$ corresponds to an empty graph, while $\left( \frac{4}{3+\delta} \right)^{t_0}$ corresponds to a complete graph).  We first demonstrate that for any $t$ (where $t_0(T) \le t \le T$) with probability at least $(1-T^{-2})^t$
\begin{equation}\label{eq:claim}
\vol (H_t) = (1+o(1))(3+\delta)^{t} \left(1+\left( \frac {3}{3+\delta} \right)^{t-t_0} A(t_0) \right) .
\end{equation}

We prove (\ref{eq:claim}) by induction on $t$. The base case, where $t = t_0$, trivially holds.
For the inductive step, assume that~(\ref{eq:claim}) holds for $t_0 = t_0(T) \le t < T$ (with probability at least $(1-T^{-2})^t$). We want to show that (\ref{eq:claim}) holds for $t+1$ (with probability at least $(1-T^{-2})^{t+1}$). Using~(\ref{eq:t0}) and~(\ref{eq:p}) we have that the expected number of \emph{random edges} added at time $t+1$ (that is, edges added between new nodes) is
\begin{eqnarray*}
\E X &=& 2^{t} (2^{t}-1) p(2^{t}) \\
&= & (1-(1/2)^t) \delta (3+\delta)^t  \\
&\ge& (1+o(1))\delta (3+\delta)^{t_0}  \\
&\ge& (1+o(1))\delta \log^4 T .
\end{eqnarray*}
Using the Chernoff bound
$$\Prob( |X-\E X| \ge \eps \E X) \le 2 \exp (-\eps^2 \E X / 3)$$
with $\eps = 1/\log T,$ we derive that the number of random edges is not concentrated with probability at most
$$
2 \exp \left(-\frac{\eps^2 \E X}{3} \right) \le 2 \exp \left(-\frac{\delta \log^2 T }{4} \right) \le T^{-2}.
$$

Thus, with probability at least $(1-T^{-2})^{t+1}$ we have that
\begin{eqnarray*}
\vol(H_{t+1}) &=& 3 \vol(H_t) + 2^{t+1} +  (1+O(\log^{-1} T)) \delta (3+\delta)^{t}  \\
&=& (1+o(1))(3+\delta)^{t} \left(3+3\left( \frac {3}{3+\delta} \right)^{t-t_0} A(t_0) + \delta \right)  \\
&=& (1+o(1))(3+\delta)^{t+1} \left(1+\left( \frac {3}{3+\delta} \right)^{t+1-t_0} A(t_0) \right).
\end{eqnarray*}

By the bounds on $A(t_0)$ it follows that
\begin{eqnarray*}
\left( \frac {3}{3+\delta} \right)^{T-t_0} A(t_0) & = &\exp \left( - \Omega(T) + O(t_0) \right)  \\
&= &o(1).
\end{eqnarray*}
Therefore, the assertion holds with probability at least
\begin{eqnarray*}
(1+T^{-2})^T &= &\exp( (1+o(1)) T^{-1}) \\
&=& 1+o(1). \qedhere \qed
\end{eqnarray*}

\noindent \textbf{Proof of Theorem~\ref{random_gap}.}
Let
$$X = V(H_T) \setminus V(H_{T-1})$$ and $$\bar{X} = V(H_T) \setminus X = V(H_{T-1}).$$ By computation it follows that a.a.s.
\begin{eqnarray*}
\vol(X) &=& (1+o(1))(1+\delta) (3+\delta)^{T-1} , \\
\vol(\bar{X}) &=& (1+o(1))2 (3+\delta)^{T-1},  \\
\vol(H_T) &=& (1+o(1))(3+\delta) (3+\delta)^{T-1},
\end{eqnarray*}
and
$$e(X,X) = (1+o(1))(3+\delta)^{T-1}.$$
Thus, by Lemma~\ref{mix} we have that a.a.s.
\begin{eqnarray*}
\lambda(T) &\ge& (1+o(1))\frac {|3+\delta - (1+\delta)^2|}{2 (1+\delta)} \\
& = &(1+o(1))\frac {2 - \delta - \delta^2}{2 (1+\delta)} \\
&=& \Omega(1). \qedhere \qed
\end{eqnarray*}

\section{Conclusion and further work}

We introduced the ILT model for OSNs and other complex
networks, where the network is cloned at each time-step. We proved
that the ILT model generates graphs with a densification power law,
in many cases decreasing average distance (and in all cases, the average distance and
diameter are bounded above by constants independent of time), have
higher clustering than random graphs with the same average degree,
and have smaller spectral gaps for both their normalized Laplacian
and adjacency matrices than in random graphs. The cop and domination number were shown to remain the same
as the graph from the initial time-step $G_0$, and the automorphism group
of $G_0$ is a subgroup of the automorphism group of graphs generated at all later times. A randomized version of the ILT
model was introduced with tuneable densification power law exponent.

As we noted after the statement
of Lemma~\ref{degree}, the ILT model does not generate graphs with a
power law degree distribution, and neither does the ILT($p$) model. An interesting problem is to design and analyze a
randomized version of the ILT model satisfying the properties
displayed in the ILT model as well as generating power
law graphs. Such a randomized ILT model should with high probability generate power law graphs with topological and spectral
properties similar to graphs from the deterministic ILT model.

Certain OSNs like Twitter are directed networks, where users may be either friends with other users (represented by undirected edges), or follow them (represented by a directed edge pointing to the follower). Hence, a more accurate model for such networks would be directed, and we will consider a directed version of the ILT model in the sequel.


\begin{thebibliography}{99}
\bibitem{ada} L.A.\ Adamic, O.\ Buyukkokten, E.\ Adar, A social network caught
in the web, \emph{First Monday} \textbf{8} (2003).

\bibitem{ahn} Y.~Ahn, S.~Han, H.~Kwak, S.~Moon, H.~Jeong, Analysis of
topological characteristics of huge on-line social networking
services, In: \emph{Proceedings of the 16th International Conference
on World Wide Web}, 2007.

\bibitem{af} M.\ Aigner, M.\ Fromme, A game of cops and robbers, \emph{Discrete Applied Mathematics} \textbf{8} (1984) 1--12.

\bibitem{beb} G.\ Bebek, P.\ Berenbrink, C.\ Cooper, T.\ Friedetzky, J.\ Nadeau,
S.C.\ Sahinalp, The degree distribution of the generalized
duplication model, \emph{Theoretical Computer Science} \textbf{369}
(2006) 234-249.

\bibitem{bhan} A.\ Bhan, D.J.\ Galas, T.G.\ Dewey, A duplication growth model of gene expression
networks, \emph{Bioinformatics} \textbf{18} (2002) 1486-1493.

\bibitem{bonato} A.~Bonato, \emph{A Course on the Web Graph}, American
Mathematical Society Graduate Studies Series in Mathematics,
{Providence}, Rhode Island, 2008.

\bibitem{bj} A.~Bonato, J.\ Janssen, Infinite limits and adjacency properties of a generalized copying model, \emph{Internet Mathematics} \textbf{4} (2009) 199-223.

\bibitem{anppc} A.\ Bonato, N.\ Hadi, P.\ Horn, P.\ Pralat, C.\ Wang, {Dynamic models of on-line social networks}, {In: \emph{Proceedings of the 6th Workshop on Algorithms and Models for the Web Graph (WAW'09)}}, 2009.

\bibitem{cald} G.\ Caldarelli, \emph{Scale-Free Networks}, Oxford University Press, Oxford, 2007.

\bibitem{sgt} F.\ Chung, {\it Spectral Graph Theory}, American Mathematical Society, {Providence}, {Rhode Island},
1997.

\bibitem{chung} F.~Chung, L.~Lu, T.~Dewey, D.~Galas, {Duplication models for
biological networks}, \emph{Journal of Computational Biology}
\textbf{10} (2003) 677-687.

\bibitem{clbook} F.\ Chung, L.\ Lu, \emph{Complex graphs and networks},
American Mathematical Society, {Providence}, {Rhode Island}, 2006.

\bibitem{clv} F.\ Chung, L.\ Lu, V.\ Vu, The spectra of random graph with given expected degrees,
{\it Internet Mathematics} {\bf 1} (2004) 257--275.

\bibitem{crandal} D.\ Crandall, D.\ Cosley, D.\ Huttenlocher, J.\ Kleinberg, S.\ Suri, Feedback effects between similarity and social influence in on-line communities, In:
 \emph{Proceedings of the 14th ACM SIGKDD Intl. Conf. on Knowledge Discovery and Data Mining}, 2008.

\bibitem{dur} R.\ Durrett, \emph{Random Graph Dynamics}, Cambridge
University Press, New York, 2006.

\bibitem{ebel} H.\ Ebel, J.\ Davidsen, S.\ Bornholdt, Dynamics of social networks, \emph{Complexity} \textbf{8} (2003) 24-27.

\bibitem{estrada} E.\ Estrada, Spectral scaling and good expansion properties in complex networks \emph{Europhys. Lett.} \textbf{73} (2006) 649-655.

\bibitem{ove} O.\ Frank, Transitivity in stochastic graphs and digraphs, \emph{Journal of Mathematical Sociology} \textbf{7} (1980) 199-213.

\bibitem{fk} Z.\ Furedi, J.\ Komlos, The eigenvalues of random symmetric matrices,
\emph{Combinatorica} \textbf{1} (1981)
 233-241.

\bibitem{gena} G.\ Hahn, Cops, robbers and graphs, \emph{Tatra Mountain Mathematical Publications} \textbf{36} (2007) 163--176.

\bibitem{gn} M.\ Girvan, M.E.J.\ Newman. Community structure in social and
biological networks, \emph{Proceedings of the National Academy of
Sciences } \textbf{99} (2002) 7821-7826.

\bibitem{mihail} C.\ Gkantsidis, M.\ Mihail, A.\ Saberi, Throughput and congestion in power-law graphs, In: \emph{Proceedings of the 2003 ACM SIGMETRICS
International Conference on Measurement Modeling of Computer
Systems}, 2003.

\bibitem{golderrsi} S.~Golder, D.~Wilkinson, B.~Huberman, Rhythms of social
interaction: messaging within a massive on-line network, In:
\emph{3rd International Conference on Communities and Technologies},
2007.

\bibitem{twit0} A.\ Java, X.\ Song, T.\ Finin, B.\ Tseng, Why we twitter: understanding microblogging usage and communities, In: \emph{
Proceedings of the Joint 9th WEBKDD and 1st SNA-KDD Workshop 2007}, 2007.

\bibitem{twit} B.\ Krishnamurthy, P.\ Gill, M.\ Arlitt, A few chirps about Twitter, In: \emph{Proceedings of The First ACM SIGCOMM Workshop on Online Social Networks}, 2008.

\bibitem{krrstu} R.\ Kumar, P.\ Raghavan, S.\ Rajagopalan, D.\ Sivakumar,
A.\ Tomkins, E.\ Upfal, Stochastic models for the web graph, In:
\emph{Proceedings of the 41th IEEE Symposium on Foundations of
Computer Science}, 2000.

\bibitem{kumar2006sae} R.~Kumar, J.~Novak, A.~Tomkins, {Structure and
evolution of on-line social networks}, {In: \emph{Proceedings of the
12th ACM SIGKDD International Conference on Knowledge Discovery and
Data Mining}}, 2006.

\bibitem{les1} J.\ Leskovec, J.\ Kleinberg, C.\ Faloutsos, Graphs over time:
densification Laws, shrinking diameters and possible explanations,
In: \emph{ Proceedings of the 13th ACM SIGKDD International
Conference on Knowledge Discovery and Data Mining}, 2005.

\bibitem{les2} J.\ Leskovec, D.\ Chakrabarti, J.\ Kleinberg, C.\ Faloutsos,
Realistic, mathematically tractable graph generation and evolution,
using Kronecker multiplication, In: \emph{Proceedings of European
Conference on Principles and Practice of Knowledge Discovery in
Databases}, 2005.

\bibitem{ln} D.\ Liben-Nowell, J.\ Novak, R.\ Kumar, P.\ Raghavan, A.\ Tomkins,
Geographic routing in social networks, \emph{Proceedings of the
National Academy of Sciences} \textbf{102} (2005) 11623-11628.

\bibitem{mil} S.\ Milgram, The small world problem, \emph{Psychology Today}
\textbf{2} (1967) 60-67.

\bibitem{mislove2007maa} A.~Mislove, M.~Marcon, K.~Gummadi, P.~Druschel,
B.~Bhattacharjee, {Measurement and analysis of on-line social
networks}, In: \emph{Proceedings of the 7th ACM SIGCOMM Conference
on Internet Measurement}, 2007.

\bibitem{nw} R.\ Nowakowski, P.\ Winkler, Vertex-to-vertex pursuit in a graph, \emph{Discrete Mathematics} \textbf{43} (1983) 235--239.

\bibitem{q} A.\ Quilliot, Jeux et pointes fixes sur les graphes, Ph.D. Dissertation, Universit\'{e} de Paris VI, 1978.

\bibitem{pastor} R.\ Pastor-Satorras, E.\ Smith, R.V.\ Sole, Evolving protein interaction networks
through gene duplication, \emph{J. Theor. Biol.} \textbf{222} (2003)
199-210.

\bibitem{scott} J.P.\ Scott, \emph{Social Network Analysis: A Handbook}, Sage Publications Ltd, London, 2000.

\bibitem{sw} D.J.\ Watts, S.H.\ Strogatz, Collective dynamics of
`small-world' networks, \emph{Nature} \textbf{393} (1998) 440-442.

\bibitem{white} H.\ White, S.\ Harrison, R.\ Breiger, Social structure from multiple networks, I: Blockmodels of roles and positions, \emph{American Journal of Sociology} \textbf{81} (1976) 730-780.

\end{thebibliography}
\end{document}